\documentclass[a4paper,UKenglish,cleveref, autoref, thm-restate]{lipics-v2021}

\usepackage{amsmath}
\usepackage{tikz}
\usetikzlibrary{arrows.meta,decorations.pathreplacing}
\usepackage{xspace}
\newcommand*{\eg}{e.g.\@\xspace}
\newcommand*{\ie}{i.e.\@\xspace}

\usepackage[textwidth=1.9cm,textsize=small]{todonotes}

\usepackage{tikz}
\usetikzlibrary{arrows,shapes,automata, calc, chains, matrix, trees, positioning, scopes}

\title{Stronger bounds on the degree of ambiguity of finite automata} 

\author{Stefan Kiefer}{University of Oxford, United Kingdom}{stefan.kiefer@cs.ox.ac.uk}{}{}
\author{Filip Mazowiecki}{University of Warsaw, Poland}{f.mazowiecki@mimuw.edu.pl}{}{Supported by the Polish National Science Centre SONATA BIS-12 grant number
2022/46/E/ST6/00230}
\author{Andrew Ryzhikov}{University of Warsaw, Poland}{a.ryzhikov@uw.edu.pl}{}{Supported by the Polish National Science Centre SONATA BIS-12 grant number
2022/46/E/ST6/00230}
\author{Antoni Wi\'{s}niewski}{University of Warsaw, Poland}{}{}{Supported by the Polish National Science Centre SONATA BIS-12 grant number
2022/46/E/ST6/00230}

\authorrunning{Stefan Kiefer, Filip Mazowiecki, Andrew Ryzhikov and Antoni Wi\'{s}niewski} 

\Copyright{Jane Open Access and Joan R. Public} 

\ccsdesc[500]{Theory of computation~Regular languages} 

\keywords{finite automata, ambiguity} 

\category{} 

\relatedversion{} 

\acknowledgements{The authors proved the result of \cref{sec:singleton} without AI assistance. They also found an exponential upper bound for the more general case, where all SCCs have the same size. The general result of \cref{sec:general} was originally proved by ChatGPT 6 Astra. The lower bound in \cref{sec:lower} was suggested earlier by ChatGPT 5.6 Sol, in interaction with the authors. The authors collaborated closely with ChatGPT 6 Astra in writing \cref{sec:lower} and \cref{app:lower}.}

\nolinenumbers 

\EventEditors{John Q. Open and Joan R. Access}
\EventNoEds{2}
\EventLongTitle{42nd Conference on Very Important Topics (CVIT 2016)}
\EventShortTitle{CVIT 2016}
\EventAcronym{CVIT}
\EventYear{2016}
\EventDate{December 24--27, 2016}
\EventLocation{Little Whinging, United Kingdom}
\EventLogo{}
\SeriesVolume{42}
\ArticleNo{23}

\newcommand{\N}{\mathbb{N}}
\newcommand{\Z}{\mathbb{Z}}
\newcommand{\Q}{\mathbb{Q}}

\newcommand{\A}{\mathcal{A}}
\newcommand{\B}{\mathcal{B}}
\newcommand{\Oh}{\mathcal{O}}
\newcommand{\Runs}{\mathsf{Runs}}

\newcommand{\PSPACE}{\textnormal{\textsf{PSPACE}}\xspace}

\AtBeginDocument{\renewcommand{\vec}[1]{\overrightarrow{#1}}}

\newcommand{\set}[1]{\{ #1 \}}
\newcommand{\trans}[1]{\xrightarrow{#1}}

\renewcommand{\vec}[1]{\bm{#1}}

\begin{document}

\maketitle

\begin{abstract}
Ambiguity measures the number of accepting runs in nondeterministic finite automata (NFA). We consider finitely ambiguous NFA, where there exists a constant $N$ such that over every word $w$ there are at most $N$ accepting runs. In such a case we also say that the NFA is $N$-ambiguous. Importantly $N$ depends only on the NFA, it does not depend on the length of the word. Weber and Seidl showed that every NFA is $N$-ambiguous for $N = 2^{\Oh(n \log n)}$, where $n$ is the number of states~\cite{WeberS91}. We improve this to $N = 2^{\Oh(n)}$, which is asymptotically tight.
\end{abstract}

\section{Introduction}
\label{sec:introduction}

Ambiguity is a fundamental syntactic notion in automata theory~\cite{Han2017,ColcombetDagRep2021}. The degree of ambiguity of an NFA $\A$ is a function $f(m) \colon \N \to \N$ defined as follows: $f(m)$ is the maximum, among all words $w$ of length $m$, number of accepting paths labelled by $w$ in $\A$. For example $f(m) = 1$ for deterministic finite automata, and in general $f(m) \le n^{\Oh(m)}$, where $n$ is the size of the NFA. Note that restricting the ambiguity even to $f(m) \le 1$ does not change expressiveness of the class, as every NFA can be determinised.

The growth of ambiguity functions is understood quite well. Namely, given an NFA, its degree of ambiguity is either~$\Oh(m^k)$ for some constant $k$, or is not bounded by any subexponential function~\cite{WeberS91}. This means, for example, that if the degree of ambiguity of an NFA is in $\Oh(\log m)$ or $\Oh(\sqrt{m})$, then it is in~$\Oh(1)$. An NFAs such that its degree of ambiguity $f(m)$ is upper bounded by a constant~$N$ is called $N$-ambiguous; $1$-ambiguous NFAs are usually called unambiguous NFA. An NFA is called finitely ambiguous if it is $N$-ambiguous for some~$N$. Deciding if, for a given NFA, its degree of ambiguity is~$\Oh(m^k)$ for some constant $k$ can be done in polynomial time~\cite{WeberS91}. If the answer is affirmative, computing the smallest such $k$ can also be done in polynomial time~\cite{Allauzen2011}.

The degree of ambiguity can also be defined for more involved automata models, such as weighted automata~\cite{Klimann2004,Filiot2014} and string-to-string transducers~\cite{Weber1993,Filiot2025}, by simply considering the degree of ambiguity of the underlying NFA. Unlike for ordinary Boolean automata, restricting the degree of ambiguity for such models typically restricts their expressivity. As a result, finitely ambiguous automata play a crucial role in algorithmic applications: many problems are decidable for finitely ambiguous automata, but are undecidable for more general classes~\cite{Almagor2022}.
A widely used approach for developing algorithms is then as follows: given an $N$-ambiguous automaton, construct an equivalent automaton that is a union of unambiguous automata. In most constructions, the number of unambiguous automata in such a union is equal to $N$. It is thus natural to ask how large the degree of ambiguity of a finitely ambiguous NFA with $n$ states can be, which is the main question addressed by this paper. 

Weber and Seidl~\cite{WeberS91} showed that it is at most $2^{\Oh(n \log n)}$, and provided a series of $N$-ambiguous $n$-state NFAs with degree of ambiguity $2^{1.0221n}$. A slightly weaker lower bound of $2^{n - 2}$ can be easily obtained by NFAs over a one-letter alphabet. We refer to \cite{WeberS91} for the bounds predating their work. In this paper, we show an upper bound of $2^{\Oh(n)}$, thus obtaining an asymptotically tight bound. Let us briefly note that from the computational complexity viewpoint, the precise ambiguity of a finitely ambiguous NFA is quite hard to analyse. Namely, given a finitely ambiguous NFA and a number $N$, deciding if it is $N$-ambiguous is \PSPACE-complete, even with the promise that $N$ does not exceed the number of states~\cite{ChanI83}.

Ambiguity of NFAs is tightly linked with the growth in (multiplicative) semigroups of nonnegative integer matrices. Consider a finite set of square rational matrices $G = \set{M_1,\ldots, M_k} \subseteq \Q^{n \times n}$, called generators. We define the semigroup generated by $G$ over the rationals: $\mathcal{M} = \set{M \in \Q^{n \times n} \mid M = N_1 \cdot \ldots \cdot  N_m, \text{ where } N_i \in G}$. In general this is a complex object, where natural problems like membership are undecidable~\cite{markov1947certain} (see \eg the introduction of~\cite{PotapovS17}). However, certain approximations of $\mathcal{M}$ are computable, \eg its Zariski closure~\cite{HrushovskiOPW23}. 

A particularly interesting case is when $\mathcal{M}$ is finite, then for example its Zariski closure remains the same set $\mathcal{M}$. Assuming finitness the size of $|\mathcal{M}|$ was analysed already in 70s~\cite{MandelS77} and recently it was shown that $|\mathcal{M}|$ is bounded exponentially~\cite{ManssourGLSW26}. Using our results one can bound the norm of $\mathcal{M}$, \ie bound the entries of matrices in $\mathcal{M}$, provided the generators are 0-1 matrices. Indeed, the assumption about 0-1 matrices allows us to indentify the generators with transition matrices of an NFA. Standard results on weighted automata~\cite[Lemma 5.1, Corollary 5.4]{MandelS77} show that finitness of $\mathcal{M}$ is equivalent to finite ambiguity, and that if the NFA is $N$-ambiguous then $N$ is a bound on the norm of $\mathcal{M}$.


\paragraph*{Technical overview}
A crucial initial step in this work is to divide the states of the NFA into strongly connected components (SCCs) defined by the directed graph of all transitions. Then we deal with two important parameters: $n$ number of states; and $k \le n$ number of components. By definition SCCs form a DAG structure. Weber and Seidl observed that one can assume that the DAG is actually a directed path, where every SCC has exactly one entry state and exactly one exit state. The general case introduces a multiplicative factor of $2^{\Oh(n)}$.

In this simplified form of SCCs forming a path they bound the ambiguity recursively. The idea is to split the path in the middle and recursively use bounds for the left and the right part~\cite[Lemma 2.5]{WeberS91}.
Then a detailed analysis shows that $\A$ is $N$-ambiguous for $N = 2^{\Oh(n \log k)}$. This is asymptotically tight if the number of components is fixed, \eg $k = 2$.

Our idea for a recursive bound is different. Suppose $k > 2$ and fix a word $w$ of length $m$. The key concept is to assign an interval of naturals $[\ell,r]$, for $1 \le \ell \le r \le m$, to every accepting run over $w$. The endpoints of the interval are defined by when the run leaves the first SCC ($\ell$) and when it enters the last SCC ($r$). A crucial observation is that we can effectively bound the number of accepting runs with pairwise disjoint intervals.

Thus we focus on bounding accepting runs such that their intervals have nonempty intersection with the interval of some fixed run. A deeper analysis of this property allows us to split every accepting run into two parts: the first part from the beginning to one of the middle SCCs, the second part leaving from the middle SCC to the last SCC. It is crucial that the middle SCC is not the first or the last ($k$-th) SCC. Then we can recursively bound the number of runs ending in the middle and starting from the middle. Summing over all middle points this recursion behaves similar to Catalan numbers $C_n = \sum_{h=0}^{n-1}C_h\cdot C_{n-h-1}$, which are known to grow exponentially $2^{\Oh(n)}$.

\paragraph*{Organisation}
In \cref{sec:preliminaries} we introduce basic definitions and recall some results of Weber and Seidl~\cite{WeberS91}. In \cref{sec:singleton} we show the solution for the case when every SCC has exactly one state. It is less technical and contains most of the main ideas needed for the general case in \cref{sec:general}. In \cref{sec:lower} we show an example that slightly improves the lower bound for ambiguity to $\Omega((1+\sqrt{2})^n)$, where $\log_2(1+\sqrt{2})=1.271553\ldots$.

\section{Preliminaries}
\label{sec:preliminaries}

Let $s,t \in \Z$. We denote intervals as $[s,t] =\set{i \in \Z \mid s \le i \le t}$. Given a word $w = w_1 \ldots w_m \in \Sigma^*$ and an interval $[i,j] \subseteq [1,m]$ we write $w[i,j]$ for the infix $w_i \ldots w_j$. If $j < 1$ or $i > j$ then $w[i..j] = \epsilon$.

A nondeterministic finite automaton (NFA) is a tuple $\A = (Q, \Sigma, \delta, Q_I, Q_F)$, where: $Q$ is a finite set of states, $\Sigma$ is a finite alphabet, $\delta \subseteq Q \times \Sigma \times Q$ is the transition relation, and $Q_I, Q_F \subseteq Q$ are the initial and final sets of states, respectively. If $Q_I = \set{p}$ or $Q_F = \set{q}$, \ie sets with one state, then we will simply write the state instead, \eg $\A = (Q, \Sigma, \delta, p, q)$.
A \emph{run} over a word $w = w_1 \ldots w_m \in \Sigma^*$ is a sequence of transitions $\rho = (q_1,w_1,q_1'), \ldots (q_m,w_m,q_m')$ such that $q_i' = q_{i+1}$ for all $1 \le i < m$.
We say that a run is an accepting run if $q_1 \in Q_I$ and $q_m' \in Q_F$.
If a run exists we write $q_1\trans{w}q_m'$, in this case we also say that $\rho$ witnesses $q_1\trans{w}q_m'$.
We also write $q_1\trans{*}q_m'$ if we mean that there exists $w$ such that $q_1\trans{w}q_m'$. We will abuse this notation writing \eg $q_1\trans{w_1w_2w_3}q_3' \trans{w_4} q_4' \trans{w_5w_6}q_6'$ meaning that there are three runs. Runs over the empty word $\epsilon$ are allowed and $q_1\trans{\epsilon}q_m'$ if and only if $q_1 = q_m'$.
%

Given a word $w$ we write $\Runs_{\A}(w)$ (or $\Runs(w)$ if $\A$ is clear from the context) for the set of accepting runs in $\A$ over $w$. We say that $\A$ is \emph{finitely ambiguous} if there exists $N$ such that for all words $w\in \Sigma^*$ we have $|\Runs(w)| \le N$. Importantly, $N$ is uniform for all words $w$. We will also write that $\A$ is $N$-ambiguous in this case.
In~\cite{WeberS91} it was shown that $\A$ is $N$-ambiguous for $N \le n^n = \Oh(2^{n\log n})$. In \cref{sec:lower} we discuss known exponential lower bounds for $N$.
The goal of this paper is to improve the upper bound to $N = 2^{\Oh(n)}$.
We start by introducing notation and results from \cite{WeberS91}, that simplify the input.

A related concept to finite ambiguity is the IDA property, introduced in~\cite{WeberS91}. We say that $\A$ \emph{complies with IDA} if there exist states $p,q \in Q$, where $p \neq q$, and a word $w \in \Sigma^*$ such that: $p \trans{w} p \trans{w} q \trans{w} q$.

To use the IDA property we need one more definition.
We say that the NFA $\A$ is \emph{trim} if all states $q \in Q$ are reachable and co-reachable, \ie there are $q_I \in Q_I$ and
$q_F \in Q_F$ such that $q_I \trans{*} q$ and $q \trans{*} q_F$. We will usually assume that NFAs are trim, as removing states that are not reachable or co-reachable is simple and does not change the ambiguity of the automaton.

\begin{theorem}[{\cite[Theorem 3.1]{WeberS91}}]\label{theorem:ida}
A trim NFA is finitely ambiguous if and only if it does not comply with IDA.
\end{theorem}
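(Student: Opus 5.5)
We prove the two directions separately. Throughout, $n = |Q|$.

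\emph{IDA implies infinite ambiguity.} Suppose $p \trans{w} p \trans{w} q \trans{w} q$ with $p \neq q$. Since $\A$ is trim, there are words $u,v$ with $q_I \trans{u} p$ for some $q_I \in Q_I$ and $q \trans{v} q_F$ for some $q_F \in Q_F$. Consider the word $u w^{m} v$. For every $i \in [1,m]$ there is an accepting run of the following shape:
\begin{itemize}
\item it reads $u$ and reaches $p$;
\item it loops at $p$ on the first $i-1$ copies of $w$;
\item it moves from $p$ to $q$ on the $i$-th copy;
\item it loops at $q$ on the remaining $m-i$ copies;
\item it reads $v$ and reaches $q_F$.
\end{itemize}
These runs are pairwise distinct. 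Since $p \ne q$, the state reached after reading $u w^{j}$ is $p$ for $j < i$ and $q$ for $j \ge i$, so the position of the switch is determined by the run. Hence $|\Runs(u w^m v)| \ge m$ for every $m$, and $\A$ is not finitely ambiguous.

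\emph{No IDA implies finite ambiguity.} We first handle runs that stay inside SCCs, via the EDA-type property. If some state $p$ has two distinct runs $p \trans{w} p$, then by trimness we can pump $w^m$ and obtain $2^m$ accepting runs. We show this configuration is excluded, or we handle it by a separate argument mirroring the above. Note that the statement as phrased only needs IDA. Weber and Seidl also require the absence of EDA, but here we handle pumping within an SCC through a Ramsey/pigeonhole step: such duplication of runs inside an SCC yields a word $w'$ and states $p \ne q$ in the same SCC with $p \trans{w'} p \trans{w'} q \trans{w'} q$. This is the step I expect to be the main obstacle, and I would try to derive it by closing the two distinct loops through the SCC using a common power of a word.

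Assuming both patterns are absent, fix an accepting run and its sequence of SCCs. An accepting run over $w$ visits a chain of at most $n$ SCCs, so it is determined by three things:
\begin{itemize}
\item its SCC chain, which allows at most $2^n$ choices;
\item the positions where it switches between SCCs;
\item within each SCC, the unique (by the absence of EDA) way of staying in that SCC.
\end{itemize}

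To bound the switch positions, we show the following: if two accepting runs follow the same SCC chain, agree on the transitions between SCCs, and differ only in when they leave some SCC $C$ towards $C'$, then we find the IDA pattern. Suppose there are unboundedly many distinct runs over words $w$. By pigeonhole on the (finitely many) SCC chains, we get many runs with the same chain. Apply Ramsey's theorem to the pairs of switch positions, with colourings given by the transition monoid elements of the infixes. This gives an idempotent word $x$ and two runs: in one, a state $p$ loops on $x$ before the transition; in the other, a state $q$ of a later SCC loops on $x$. Combining these gives $p \trans{x} p \trans{x} q \trans{x} q$ with $p \neq q$, since $p$ and $q$ lie in different SCCs. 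This contradicts the absence of IDA.

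Therefore the number of runs is bounded by a function of $n$ alone.
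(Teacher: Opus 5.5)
The paper does not prove this theorem; it cites it from Weber and Seidl, so I am judging your argument on its own. Your direction ``IDA implies infinite ambiguity'' (pumping $uw^mv$) is correct. The converse, however, has genuine gaps.

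\textbf{The announced lemma is false.} You claim that two accepting runs with the same SCC chain and the same bridges, differing only in when they leave $C$, yield IDA. Take states $s$ (initial), $t$ (final) and $r$, with transitions $(s,c,s)$, $(s,c,t)$, $(t,c,r)$, $(r,d,t)$, $(t,d,t)$; the SCCs are $\{s\}$ and $\{t,r\}$. On $ccd$, the runs $s\trans{c}t\trans{c}r\trans{d}t$ and $s\trans{c}s\trans{c}t\trans{d}t$ use the same bridge $(s,c,t)$ at positions $1$ and $2$. Yet there is no IDA: nonempty loops at $s$ lie in $c^+$, loops at $t$ end with $d$, and loops at $r$ start with $d$ and end with $c$, so no two distinct states share a loop word. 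This matches \cref{lemma:fulldecomposition}, where IDA appears only after pigeonholing over many runs.

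\textbf{The Ramsey step does not produce the IDA pattern as described.} This step carries the whole proof. IDA needs one word $x$ with $p\trans{x}p$, $p\trans{x}q$ and $q\trans{x}q$. One run looping at $p$ and another looping at $q$ give no run $p\trans{x}q$. Moreover, an idempotent infix does not by itself make any run loop. What is missing is the following argument.
\begin{enumerate}
\item With bridges and initial/final states fixed, a run is determined by its tuple of switch positions; this is where uniqueness inside SCCs is used. So many runs force a large set $P$ of switch positions.
\item Ramsey, colouring $a<b$ by the relation of $w[a+1..b]$, gives $m_1<\dots<m_{2n+1}$ in $P$ whose infixes all have the same relation $E$. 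Hence $E^2=E$.
\item Take a run $\rho$ switching at $m_{n+1}$, and write $\rho(a)$ for its state after $a$ letters. Since $\rho$ visits at most $n$ SCCs, pigeonhole gives $a_1<a_2$ among $m_1,\dots,m_n$ with $\rho$ in one SCC $C$. It also gives $a_3<a_4$ among $m_{n+1},\dots,m_{2n+1}$ with $\rho$ in a later SCC $C'$.
\item $E$ restricted to $C\times C$ is again idempotent, because runs between states of $C$ stay in $C$. A repeated state in an $E$-chain of length $n+1$ from $\rho(a_1)$ to $\rho(a_2)$ yields $p\in C$ with $pEp$ and $pE\rho(a_2)$. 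Symmetrically there is $q\in C'$ with $qEq$ and $\rho(a_3)Eq$.
\item Then $(p,q)\in E^3=E$, so $x=w[a_1+1..a_2]$ witnesses IDA with $p\neq q$.
\end{enumerate}

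\textbf{The step you flag as the main obstacle is easy and should be proved.} Suppose two distinct runs $p\trans{w}p$ are in states $r_1\neq r_2$ after a prefix $y$ of $w=yz$. Then $r_1\trans{zy}r_1\trans{zy}r_2\trans{zy}r_2$. Appending a return path inside the SCC turns two runs $u\trans{w}v$ into two loops at $u$. This gives the within-SCC uniqueness that your counting relies on.
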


One corollary from \cref{theorem:ida} is that finite ambiguity does not depend on initial and final states for trim NFA. We will implicitly use this property when changing the initial and final states in finitely ambiguous NFAs.

Let $Q = Q_1 \uplus \ldots \uplus Q_k$ be a decomposition of the states into strongly connected components (SCCs). We will always assume that $Q_1$, \ldots $Q_k$ are ordered topologically. That is, if there is a transition $(p,a,q) \in \delta$ such that $p \in Q_i$, $q \in Q_j$ for $i \neq j$ then $i < j$. A transition between different connected components is called a \emph{bridge}.

We will use the following simple claim about finite ambiguity in SCCs.

\begin{claim}[{\cite[Lemma 2.4]{WeberS91}}]\label{claim:unambiguous}
Let $\A$ be a trim finitely ambiguous NFA over the alphabet $\Sigma$. Fix $w \in \Sigma^*$ and two states $p$, $q$ from the same SCC. There is at most one run witnessing $p \trans{w} q$.
\end{claim}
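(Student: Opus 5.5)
We argue by contradiction: assuming two distinct runs witness $p \trans{w} q$ with $p,q$ in the same SCC, we build a witness of IDA and then invoke \cref{theorem:ida}.

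\emph{Step 1 (locating the split).} Suppose two distinct runs $\rho \neq \rho'$ both witness $p \trans{w} q$. Write $w = w_1\ldots w_m$. The runs agree at position $0$ (both are in $p$) and at position $m$ (both are in $q$), yet they differ somewhere. Hence there is a first transition index $i$ at which they differ. At this index both runs start from the same state $r$ and read the same letter $w_i$, but either the target states differ, or the targets coincide and the transitions nonetheless differ. The second case cannot happen, since $\delta$ is a relation and a transition is determined by its triple. So after reading $w[1,i]$ the runs sit in distinct states $s \neq s'$, having passed through the common state $r$ just before. All these states lie on runs from $p$ to $q$, and $p,q$ are in one SCC, so every state visited by either run lies in that same SCC.

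\emph{Step 2 (closing the loop).} We want a word $u$ with two distinct runs $p \trans{u} p$. Since $p$ and $q$ lie in the same SCC, choose $v$ with $q \trans{v} p$ and set $u = wv$. Extending $\rho$ and $\rho'$ by a fixed run over $v$ gives two distinct runs $p \trans{u} p$; they differ already inside the $w$-part. Then, for any $j$, the runs $p \trans{u^j} p$ number at least $2^j$: at each block we may choose independently between the two versions, and distinct choice sequences give distinct runs because the first differing block yields a difference.

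\emph{Step 3 (reaching acceptance).} Since $\A$ is trim, pick $x$ with $q_I \trans{x} p$ for some $q_I \in Q_I$, and $y$ with $p \trans{y} q_F$ for some $q_F \in Q_F$. Then the word $x u^j y$ has at least $2^j$ accepting runs. Letting $j \to \infty$ contradicts finite ambiguity.

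Conversely, this argument does not even need \cref{theorem:ida}: it directly shows that a trim NFA with two distinct runs between states of the same SCC over the same word has unbounded ambiguity. Step~2 is the only place needing care, namely the injectivity of choice sequences to runs. This holds because the block boundaries are fixed positions (multiples of $|u|$) and every choice returns to $p$ at those positions, so runs agreeing up to a block and differing within it are distinct. The remaining obstacle is minor: $p$ and $q$ themselves might be unreachable or non-co-reachable. Trimness rules this out.
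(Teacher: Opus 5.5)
The paper gives no proof of this claim; it is quoted from Weber and Seidl [Lemma 2.4], so there is no in-paper argument to compare against. Your proof is correct, self-contained, and is the standard argument. Two distinct runs $p \trans{w} q$ together with a return word $q \trans{v} p$ give two distinct cycles $p \trans{wv} p$; this is Weber and Seidl's EDA condition. Trimness then pumps these to at least $2^j$ accepting runs on $x(wv)^jy$, and your injectivity argument for the choice sequences is sound.

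Two presentational points. Your Step~1 and your opening announcement that you will build an IDA witness and invoke \cref{theorem:ida} are never used, because Steps~2--3 do all the work directly. If you did want the IDA route, Step~1 supplies it: split $w=w'w''$ at the divergence point, so the runs pass through $s\neq s'$, and put $z=w''vw'$. Then $s \trans{z} s \trans{z} s' \trans{z} s'$. That route still needs the easy direction of \cref{theorem:ida}, so your direct pumping is the cleaner choice.
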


We observe that one implication from \cref{theorem:ida} can be slightly generalised, which will be convenient in \cref{sec:general}.
Figure~\ref{fig:idasubset} illustrates the three conditions in the following proposition.

\begin{proposition}\label{proposition:idasubset}
Fix a trim NFA $\A$ and let $Q_i$ and $Q_j$ be its SCCs such that $i < j$. Fix two nonempty subsets $U \subseteq Q_i$, $V \subseteq Q_j$. Suppose there is a word $w$ such that:
\begin{enumerate}
 \item\label{p:1} for all $u \in U$ there is $u'\in U$ such that $u' \trans{w} u$;
 \item\label{p:2} for all $v \in V$ there is $v' \in V$ such that $v \trans{w} v'$;
 \item\label{p:3} there exist $u \in U$ and $v \in V$ such that $u \trans{w} v$.
\end{enumerate}
Then $\A$ is not finitely ambiguous.
\end{proposition}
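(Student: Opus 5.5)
Our plan is to reduce to \cref{theorem:ida}: from a word $w$ satisfying the three conditions we build a power $w^{t}$ and states $p\in Q_i$, $q\in Q_j$ with $p \trans{w^t} p \trans{w^t} q \trans{w^t} q$. Since $i<j$ we have $p\neq q$, so $\A$ complies with IDA and, being trim, is not finitely ambiguous.

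\textbf{Step 1 (backward cycle in $U$).} Condition~\ref{p:1} gives a function $f\colon U\to U$ with $f(x)\trans{w} x$ for all $x\in U$. Start from the $u$ of condition~\ref{p:3} and iterate, $u, f(u), f^2(u),\dots$. Since $U$ is finite, some element $p=f^{a}(u)$ satisfies $f^{c}(p)=p$ for some $c\ge 1$. Reading the chain forwards, $p=f^{c}(p)\trans{w^{c}} p$ and $p=f^{a}(u)\trans{w^{a}} u$.

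\textbf{Step 2 (forward cycle in $V$).} Symmetrically, condition~\ref{p:2} gives $g\colon V\to V$ with $x\trans{w} g(x)$. Iterating from the $v$ of condition~\ref{p:3}, we get $q=g^{b}(v)$ with $g^{d}(q)=q$ for some $d\ge1$. Hence $v\trans{w^{b}} q$ and $q\trans{w^{d}} q$.

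\textbf{Step 3 (aligning lengths).} Combining the pieces gives $p\trans{w^{a}}u\trans{w}v\trans{w^{b}}q$, that is, $p\trans{w^{a+b+1}}q$. Choose $t$ as a common multiple of $c$ and $d$ with $t\ge a+b+1$. We pad the connecting path by looping at $p$ before and at $q$ after. Write $t-(a+b+1)=\alpha c+\beta d+r$, which needs care, so instead take $t = L\cdot\mathrm{lcm}(c,d)$ with $L$ large and pad as follows. The path $p\trans{w^{a+b+1}}q$ can be lengthened by any multiple of $c$ (loops at $p$) and any multiple of $d$ (loops at $q$). The key point is to first make the connecting length itself a multiple of $\mathrm{lcm}(c,d)$.

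This alignment is the only real obstacle. We plan to resolve it by replacing $p$ with another point on its cycle rather than by padding. Along the $f$-cycle, $f^{s}(p)\trans{w^{s}}\cdots$: moving $p$ backwards by $s$ steps along the $c$-cycle to $p'=f^{s}(p)$, which is still on the cycle, gives $p'\trans{w^{s}}p\trans{w^{a+b+1}}q$. Choosing $s\equiv -(a+b+1)\pmod c$ makes the connecting length $\equiv 0\pmod c$. We then adjust $q$ forward along its $d$-cycle similarly, which adds lengths $s'$ to fix the residue mod $d$. Fixing both residues mod $c$ and mod $d$ at once is handled by reaching the residue class of $\mathrm{lcm}$ by moving only along $p$'s cycle by $s$ steps and along $q$'s by $s'$ steps. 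If that becomes fiddly, we will use the simpler trick of replacing $w$ by $w' = w^{M}$ with $M=\mathrm{lcm}(c,d)$: then $p\trans{w'}p$ and $q\trans{w'}q$. For the connection, note $p\trans{w^{c\cdot x}}p$ and $q\trans{w^{d y}}q$, so $p\trans{w^{a+b+1+cx+dy}}q$. Choosing instead $t$ with $t$ a multiple of $M$ and $t\ge a+b+1$, and shifting $p$ to $p'=f^{s}(p)$ with $s \equiv t-(a+b+1) \pmod{c}$, gives $p'\trans{w^{s+a+b+1}}q$ where $s+a+b+1\equiv t \pmod c$. The remaining difference is a multiple of $c$, which we absorb by loops at $p'$. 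Since $p'$ lies on a $c$-cycle it also satisfies $p'\trans{w^{t}}p'$, and $q\trans{w^t}q$ because $d\mid t$. This gives IDA with $p',q$.
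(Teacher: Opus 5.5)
Your final paragraph is a correct proof and follows essentially the paper's route. You extract $w$-power cycles in $U$ and $V$ from conditions 1 and 2, join them through condition 3, shift one endpoint along its cycle to fix the residue of the connecting length, and invoke \cref{theorem:ida}. The paper gets a common period by pigeonholing on pairs $(u_t,v_t)$ and shifts in $V$, whereas you take $t$ a multiple of $\mathrm{lcm}(c,d)$ and shift in $U$. Delete the abandoned attempts (the $\alpha c+\beta d+r$ decomposition and the vague sentence about fixing both residues at once). Also take $s$ to be the least nonnegative residue of $t-(a+b+1)$ modulo $c$; together with $t\ge a+b+1$ this guarantees that the padding $t-(s+a+b+1)$ is a nonnegative multiple of $c$.
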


\begin{figure}[htbp]
\begin{nolinenumbers}
\centering
\begingroup
\definecolor{subsetink}{HTML}{263747}
\begin{tikzpicture}[
  x=1cm,y=1cm,font=\small,text=subsetink,
  patharrow/.style={-{Stealth[length=2.5mm,width=1.7mm]},line width=1.25pt,shorten <=2.8pt,shorten >=3.2pt},
  wordlabel/.style={font=\small,fill=white,inner sep=1.5pt},
  dot/.style={circle,inner sep=0pt,minimum size=4.6pt},
  marked dot/.style={dot,draw=white,line width=.65pt},
  subset band/.style={fill=yellow!12,rounded corners=5pt},
  subset brace/.style={decorate,decoration={brace,amplitude=5pt},gray!65}]
\fill[subset band] (-.22,2.67) rectangle (5.42,5.03);
\fill[subset band] (-.22,-.73) rectangle (5.42,1.63);
\foreach \yy in {4.7,3.85,3,1.3,.45,-.4}
  \draw[gray!22] (0,\yy)--(5.2,\yy);
\foreach \xx in {0,5.2}
  \draw[gray!18,densely dotted] (\xx,-.73)--(\xx,5.03);
\foreach \side/\xx in {L/0,R/5.2}{
  \foreach \state/\yy in {uone/4.7,u/3.85,utwo/3,v/1.3,vone/.45,vtwo/-.4}{
    \coordinate (\state\side) at (\xx,\yy);
    \node[dot,fill=gray!40] at (\state\side) {};
  }
}
\foreach \statelabel/\yy in {u_1/4.7,u/3.85,u_2/3,v/1.3,v_1/.45,v_2/-.4}{
  \node[anchor=east] at (-.32,\yy) {$\statelabel$};
  \node[anchor=west] at (5.52,\yy) {$\statelabel$};
}
\draw[subset brace] (-.83,2.67)--(-.83,5.03)
  node[midway,left=8pt,align=center,text=subsetink] {$U$\\[-1pt]\scriptsize inside $Q_i$};
\draw[subset brace] (-.83,-.73)--(-.83,1.63)
  node[midway,left=8pt,align=center,text=subsetink] {$V$\\[-1pt]\scriptsize inside $Q_j$};
\foreach \from/\to in {uoneL/uoneR,uoneL/uR,uL/utwoR}
  \draw[patharrow,black!60] (\from)--(\to) node[pos=.55,above,wordlabel] {$w$};
\foreach \from/\to in {vL/voneR,voneL/vtwoR,vtwoL/vtwoR}
  \draw[patharrow,black!60] (\from)--(\to) node[pos=.43,above,wordlabel] {$w$};
\draw[patharrow,subsetink] (uL)--(vR)
  node[pos=.53,above,wordlabel] {$(\ref{p:3})\quad w$};
\foreach \state in {uoneL,uoneR,uR,utwoR,vL,voneL,vtwoL,voneR,vtwoR}
  \node[marked dot,fill=black!60] at (\state) {};
\foreach \state in {uL,vR}
  \node[marked dot,fill=subsetink] at (\state) {};
\node[anchor=west,align=left] at (6.3,3.85)
  {$(\ref{p:1})$ Every state in $U$ can be\\
   reached from a state in $U$\\
   by a run labelled $w$.};
\node[anchor=west,align=left] at (6.3,.45)
  {$(\ref{p:2})$ Every state in $V$ has a\\
   run labelled $w$ to a state\\
   in $V$.};
\end{tikzpicture}
\endgroup
\caption{Illustration of the hypotheses of Proposition~\ref{proposition:idasubset}.
Each arrow represents a run labelled $w$.}
\label{fig:idasubset}
\end{nolinenumbers}
\end{figure}

\begin{proof}
Let $u$ and $v$ be as in \cref{p:3}. We observe that for all $m$ there are $u_{m+1},\ldots,u_{1} \in U$; and $v_1,\ldots,v_{m+1}\in V$ such that $u_{1} \trans{w} u_2 \trans{w} \ldots u_{m+1} \trans{w} v_1 \trans{w} v_2 \trans{w} \ldots v_{m+1}$. Indeed, defining $u_{m+1} =u$ and $v_1 = v$ we guarantee the middle transitions by \cref{p:3}. Then we can extend the path both ways: applying \cref{p:1} for states in $U$ and \cref{p:2} for states in $V$.

Fix $m > |Q_i|\cdot |Q_j|$. Among the $m+1$ pairs $(u_t,v_t)$, two must coincide, so there exist $1 \le a < b \le m+1$ such that $u_a = u_b$ and $v_a = v_b$.
Choose $a\le c<b$ and an integer $N\ge1$ such that $m+1+c-a=N(b-a)$; this is possible because $m+1+c-a$ ranges over $b-a$ consecutive integers.
The constructed run gives $u_a\trans{w^{N(b-a)}}v_c$.
There are also runs from $u_a$ and $v_c$ back to themselves labelled $w^{b-a}$: follow $u_a,\ldots,u_b=u_a$ and $v_c,\ldots,v_b=v_a,\ldots,v_c$, respectively.
Repeating these runs $N$ times gives $u_a \trans{w^{N(b-a)}} u_a \trans{w^{N(b-a)}} v_c \trans{w^{N(b-a)}} v_c$, so $\A$ complies with IDA.
\end{proof}

We now define restricted NFAs with which it will be more convenient to work.
We say that $\A$ is a \emph{chain NFA} if there are states $p_1, q_1 \in Q_1$; $p_2, q_2 \in Q_2$; \ldots $p_k, q_k \in Q_k$ such that:
\begin{itemize}
 \item the initial and final states in $\A$ are: $Q_I = \set{p_1}$ and $Q_F = \set{q_k}$;
 \item all bridges are of the form $(q_i,a,p_{i+1})$ for some $a \in \Sigma$ and $1 \le i < k$.
\end{itemize}
We say that $p_i$ and $q_i$ are the states characterising the chain. Notice that accepting runs in a chain NFA have exactly $k-1$ bridges.

Most of the time we will assume that the input is a chain NFA. This is due to the following reduction, which introduces a multiplicative factor of $2^{\Oh(n)}$ in ambiguity compared to the general case.

\begin{lemma}[Corollary of {\cite[Lemma 2.3]{WeberS91}}]\label{lemma:chain}
Let $\A$ be a finitely ambiguous NFA with $n$ states. There are finitely ambiguous chain NFAs $\A_1, \ldots, \A_N$ such that: $|\Runs_{\A}(w)| = \sum_{i=1}^N |\Runs_{\A_i}(w)|$; every $\A_i$ has at most $n$ states; and $N \le 5^{n/2}$.
\end{lemma}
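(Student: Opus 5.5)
We decompose accepting runs of $\A$ according to the sequence of SCCs they visit and the bridges they use to move between them. Every accepting run of $\A$ visits a sequence of SCCs $Q_{i_1}, \ldots, Q_{i_k}$ with $i_1 < \dots < i_k$, since the SCCs are topologically ordered. Within each visited SCC it enters at some state $p_j$ (an initial state for $j=1$, otherwise the target of the incoming bridge). It leaves at some state $q_j$ (a final state for $j=k$, otherwise the source of the outgoing bridge). We call the tuple consisting of the visited SCCs, the entry/exit states $p_j,q_j$ and the letters labelling the bridges the \emph{shape} of the run. For each shape we build a chain NFA $\A_{\text{shape}}$. Its state set is the disjoint union of the visited SCCs with all their internal transitions. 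Its only initial state is $p_1$, its only final state is $q_k$, and its bridges are exactly the chosen bridges $(q_j,a_j,p_{j+1})$. This has at most $n$ states and is a chain NFA by definition. Accepting runs of $\A$ over $w$ are in bijection with the disjoint union over shapes of accepting runs of $\A_{\text{shape}}$ over $w$, since a run determines its shape uniquely. Finite ambiguity is inherited, because each $\A_{\text{shape}}$ embeds into $\A$ and every accepting run of it maps injectively to an accepting run of $\A$. (We may trim first; trimming only deletes states.)

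The difficulty is the count. Naively the number of shapes is far larger than $5^{n/2}$: a single SCC of size $s$ contributes up to $s^2$ choices of $(p_j,q_j)$, and there are many bridge letters and many subsets of SCCs. We must use finite ambiguity to prune. This is the content of \cite[Lemma 2.3]{WeberS91}, and it is the step I expect to be the main obstacle. The plan is to merge shapes instead of separating them fully, keeping only as much information as necessary. Entry and exit states can be kept as \emph{sets} rather than single states: a chain NFA may be taken to have its bridges all entering $p_{i+1}$ from $q_i$. Alternatively, one notes that finite ambiguity forbids certain combinations, for instance two bridges from one SCC into another SCC both being usable in a common word.

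The bound $5^{n/2}$ suggests a product of per-SCC factors $c(s)$ with $c(s)\le 5^{s/2}$. A singleton SCC would contribute a factor of at most $\sqrt5$ in an amortised sense, and larger SCCs with $s^2\le 5^{s/2}$ would contribute $s^2$ for the entry/exit pair. Indeed $s^2\le 5^{s/2}$ for all $s\ge1$ except around $s=2,3$: at $s=2$ we get $4\le 5$, and at $s=3$ we get $9\le 11.18$, so it holds throughout. Thus I would aim to show the following: after fixing which SCCs are visited, the number of relevant shapes is at most $\prod_j |Q_{i_j}|^2$, with bridge letters absorbed by allowing all bridges between the chosen $q_j,p_{j+1}$. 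Summing over subsets of SCCs gives $\prod_{\text{all SCCs}} (1+s_i^2)$. This must be compared with $5^{n/2}=\prod 5^{s_i/2}$. The check $1+s^2\le 5^{s/2}$ fails at $s=1$ ($2>\sqrt5$ is false: $2\le 2.236$, fine), at $s=2$ ($5\le5$, fine) and at $s=3$ ($10\le 11.18$, fine), and holds for larger $s$ by growth. So the final inequality is elementary.

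Allowing all bridges between $q_j$ and $p_{j+1}$ keeps runs distinct, since distinct letters give distinct transitions. The map from runs of $\A$ to runs of the chain NFAs is therefore still a bijection, and the sum identity holds.
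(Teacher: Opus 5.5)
The paper gives no proof of this lemma beyond citing \cite[Lemma 2.3]{WeberS91}, so your argument stands as a self-contained proof, and its final version is correct. You index the pieces by the set of visited SCCs and an entry/exit pair $(p,q)$ in each, and keep \emph{all} transitions $(q_j,a,p_{j+1})\in\delta$ as bridges. Every accepting run of $\A$ then lies in exactly one piece. Conversely, an accepting run of a piece must cross each chosen pair $q_j\to p_{j+1}$, so it has exactly that shape. This gives the sum identity, and the injection into $\Runs_{\A}(w)$ gives finite ambiguity of each piece. The count $\prod_i(1+s_i^2)\le\prod_i 5^{s_i/2}=5^{n/2}$ needs only $1+s^2\le 5^{s/2}$. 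That inequality holds with equality exactly at $s=2$, which explains the constant.

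Your middle paragraph is a red herring and should be removed. The count is purely combinatorial: finite ambiguity plays no role in it, so there is nothing to ``prune'' and no hard step hidden in the cited lemma. The ``alternatively'' remark is also false. Finite ambiguity does not forbid two bridges between the same pair of SCCs from being used on a common word. The paper's own example witnessing $A_2=2$ uses the two bridges $p_1\trans{a}p_2$ and $p_1\trans{b}p_2$ on the word $ab$. State from the outset that shapes do not record bridge letters, rather than introducing letters and removing them later.

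One point is worth adding. Discard shapes in which $p_1\notin Q_I$, $q_k\notin Q_F$, or some pair $(q_j,p_{j+1})$ has no transition; they contribute nothing. The remaining pieces are then automatically trim, because every state of a visited SCC is reachable from its entry state and reaches its exit state. This matters because \cref{theorem:main} is proved for trim chain NFAs.
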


Let $\A$ be a chain NFA with $k\ge2$ and the SCC decomposition: $Q = Q_1 \uplus \ldots \uplus Q_k$. Fix $w = a_1\ldots a_m \in \Sigma^*$ and let $\rho = t_1 \ldots t_{m} \in \Runs_{\A}(w)$.
We will assign an interval to each run. First, recall that $p_i$, $q_i$ are states characterising the chain. To define the left endpoint let $\ell_{\A} \colon \Runs_{\A}(w) \to \set{1,\ldots,m}$ be the last time $\rho$ was in $Q_1$. Formally, $\ell_{\A}(\rho) = s$ if $t_{s} = (q_1,a_{s},p_2)$. Similarly, $r_{\A} \colon \Runs_{\A}(w) \to \set{1,\ldots,m}$ is the first time $\rho$ entered $p_k$. Formally, $r_{\A}(\rho) = s$ if $t_{s} = (q_{k-1},a_s,p_k)$. We will drop the automaton subscript if it is clear from context. Thus to every $\rho$ we assign the interval $I_\rho = [\ell(\rho),r(\rho)]$.

\section{Singleton SCCs}
\label{sec:singleton}

In this section we show the upper bound for a simple variant of chain NFAs. It illustrates the main ideas in the general construction, while it is conceptually simpler. We will exploit this to provide intuition in \cref{sec:general}.

A chain NFA $\A = (Q, \Sigma, \delta, p_1, p_n)$ is called a \emph{singleton-SCC NFA} if: $Q = \set{p_1,\ldots,p_n}$; $(p_i,a,p_j) \in \delta$ then $j \in \set{i,i+1}$. So every transition is either a self loop or moves one step to the right, see \cref{fig:singletons} for an example. The following is the main result of this section.

\begin{figure}
\centering
\begin{tikzpicture}
\node[state] (p1) {$p_1$};
\node[state,right = 1cm of p1] (p2) {$p_2$};
\node[state,right = 1cm of p2] (p3) {$p_3$};
\node[state,right = 1cm of p3] (p4) {$p_4$};

\path
(p1) edge[->,>=stealth] node[above] {$a,b,c$} (p2)
(p2) edge[->,>=stealth] node[above] {$a,b$} (p3)
(p3) edge[->,>=stealth] node[above] {$a,c$} (p4)
(p1) edge[->,>=stealth,loop above] node[above] {$a$} (p1)
(p2) edge[->,>=stealth,loop above] node[above] {$c$} (p2)
(p3) edge[->,>=stealth,loop above] node[above] {$b$} (p3)
(p4) edge[->,>=stealth,loop above] node[above] {$b,c$} (p4)
;

\end{tikzpicture}
\caption{A $4$ state chain NFA with singleton SCCs over the alphabet $\Sigma = \set{a,b,c}$.}\label{fig:singletons}
\end{figure}

\begin{theorem}\label{theorem:singletons}
A finitely ambiguous singleton-SCC NFA with $n$ states is $N$-ambiguous for $N = 2^{\Oh(n)}$.
\end{theorem}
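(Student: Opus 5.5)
We prove by strong induction on $n$ that $f(n) \le C\cdot c^n$, where $f(n)$ is the maximum ambiguity of a finitely ambiguous singleton-SCC NFA with $n$ states. For $n\le 2$ the claim is immediate. For $n=2$, \cref{theorem:ida} shows that two runs over $w$ with different bridge positions $s<s'$ give $w[s,s'] $ read as a loop on both $p_1$ and $p_2$ and also through the bridge. From this one obtains IDA with the word $w[s+1,s']$ pumped, so $f(2)\le 1$ (up to a small constant). Now let $n\ge3$, fix $w$ of length $m$, and assign to each accepting run $\rho$ its interval $I_\rho=[\ell(\rho),r(\rho)]$. Here $\ell(\rho)$ is the position of the bridge $p_1\to p_2$ and $r(\rho)$ is the position of the bridge $p_{n-1}\to p_n$.

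\emph{Step 1: runs with pairwise disjoint intervals.} We claim that there are no two accepting runs $\rho,\rho'$ with $r(\rho)<\ell(\rho')$. Suppose otherwise. Let $x=w[r(\rho)+1,\ell(\rho')-1]$. Then $\rho'$ stays in $p_1$ while reading $x$, and $\rho$ stays in $p_n$ while reading it. More usefully, consider $u=w[\ell(\rho),\ell(\rho')]$ (roughly). Along it, $\rho'$ loops at $p_1$, while $\rho$ goes from $p_2$ and reaches $p_n$. A cleaner witness is to apply \cref{proposition:idasubset} with $U=\{p_1\}$, $V=\{p_n\}$ and $w[\ell(\rho),r(\rho')]$. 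On this word, $\rho'$ shows that $p_1$ loops on a prefix and then moves to $p_n$ on the remainder. I expect that a careful choice of the infix $y=w[\ell(\rho),r(\rho')]$ gives $p_1\trans{y}p_1$ (from $\rho'$, which stays in $p_1$ until $\ell(\rho')>r(\rho)$), $p_n\trans{y}p_n$ (from $\rho$, which is in $p_n$ after $r(\rho)<\ell(\rho')$), and $p_1\trans{y}p_n$. The last two loop requirements need adjusting the endpoints, and this is the bookkeeping to check. Hence all intervals pairwise intersect, and by the Helly property on the line they share a common position $t$.

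\emph{Step 2: splitting at the common point.} Since all accepting runs contain position $t$ in $[\ell,r]$, each run is at time $t$ in some middle state $p_h$ with $2\le h\le n-1$. This uses the fact that it has left $p_1$ by time $\ell\le t$ and has not entered $p_n$ before $r\ge t$. Thus
\[|\Runs(w)|\le \sum_{h=2}^{n-1} |\{p_1\trans{w[1,t]}p_h\}|\cdot|\{p_h\trans{w[t+1,m]}p_n\}|.\]
The first factor is bounded by the ambiguity of the singleton-SCC NFA on $p_1,\dots,p_h$ with final state $p_h$ (finitely ambiguous by the remark after \cref{theorem:ida}, since it is a trim sub-NFA). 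The second factor is bounded by the ambiguity of the NFA on $p_h,\dots,p_n$. Hence $f(n)\le\sum_{h=2}^{n-1} f(h)f(n-h+1)$.

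\emph{Step 3: solving the recursion.} The recursion is Catalan-like, since both factors have strictly fewer than $n$ states and the sizes add up to $n+1$. Set $g(n)=f(n+1)$. Then $g(n)\le\sum_{j=1}^{n-2} g(j)g(n-1-j)$, which is dominated by the Catalan recursion, so $g(n)\le C_n\cdot\max(1,f(2))^{n}\le 4^{n}$ up to constants. This gives $N=2^{\Oh(n)}$. The main obstacle is Step 1, namely producing the exact IDA/\cref{proposition:idasubset} witness from two runs with disjoint intervals. My first attempt would be to take the word between $\ell(\rho)$ and $r(\rho')$, checking that $\rho'$ stays in $p_1$ up to $\ell(\rho')-1$ and $\rho$ stays in $p_n$ from $r(\rho)$ on.
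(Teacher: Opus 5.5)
Your Step 1 is not a bookkeeping issue: the claim is false. For any $n\ge2$, put an $a$-loop on $p_1$, a $b$-loop on $p_n$, no loops on $p_2,\dots,p_{n-1}$, and label every bridge $p_i\to p_{i+1}$ by both $a$ and $b$. There is no IDA, because the only nonempty loops are at $p_1$ and $p_n$ and they use disjoint letters. Yet on $a^{n-1}b^{n-1}$ there are accepting runs with intervals $[1,n-1]$ and $[n,2n-2]$. For $n=2$ this is the paper's witness that $A_2=2$, so your $f(2)\le 1$ is also wrong. In your $n=2$ argument the two loops are on different words: $w[s..s'-1]$ at $p_1$ and $w[s+1..s']$ at $p_2$.

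The obstruction is structural. IDA needs one word $v$ that labels a loop at $p_1$, a path $p_1\trans{v}p_n$, and a loop at $p_n$. From $r(\rho)<\ell(\rho')$, the word $v=w[\ell(\rho)..r(\rho)]$ gives you the crossing (from $\rho$) and the $p_1$-loop (from $\rho'$), but no run is known to sit in $p_n$ during that window. Your candidate $w[\ell(\rho)..r(\rho')]$ fails both loops: $\rho$ is still in $p_1$ at its start, and $\rho'$ leaves $p_1$ inside it. The $p_n$-loop must come from a \emph{third} run whose interval lies entirely to the left. With $I_{\rho_1}<I_{\rho_2}<I_{\rho_3}$, the word $w[\ell_2..r_2]$ loops at $p_n$ via $\rho_1$, crosses via $\rho_2$, and loops at $p_1$ via $\rho_3$.

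So what is true is that no three accepting runs have pairwise disjoint intervals. Helly's property does not apply, and you need two piercing points instead of one:
\begin{itemize}
\item Take $\rho_1$ with the smallest right endpoint $r_1$. Every run meeting $I_{\rho_1}$ contains $r_1$.
\item The remaining runs all have $\ell>r_1$. Among them take $\rho_2$ with the smallest $r_2$.
\item A remaining run missing $I_{\rho_2}$ would complete a disjoint triple with $\rho_1,\rho_2$. Hence all remaining runs contain $r_2$.
\end{itemize}
This costs only a factor $2$ in the recursion.

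There is a second gap in Step 2. After reading $t$ letters, a run with $r(\rho)=t$ is already in $p_n$. So $h=n$ does occur, unavoidably whenever the common intersection is a single point. For $h=n$ your split leaves the prefix in the full $n$-state automaton, so the recursion does not close.

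These runs need separate treatment. They cross into $p_n$ exactly at position $t$ and then loop deterministically. They are therefore determined by a run $p_1\trans{w[1..t-1]}p_{n-1}$, so there are at most $f(n-1)$ of them per piercing point. The total $2f(n-1)$ is absorbed by the $h=2$ summand $f(2)f(n-1)=2f(n-1)$.

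With both repairs you get $f(n)\le 3\sum_{h=2}^{n-1}f(h)f(n-h+1)$. This is exactly the recursion in the paper's proof, and your Catalan estimate then finishes the argument.
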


\begin{example}
Consider the automaton $\A$ in \cref{fig:singletons}. Let $w = aaacccbbccc$. There are three accepting runs in $\Runs_{\A}(w) = \set{\rho_1, \rho_2, \rho_3}$: $\rho_1$ moves always to the right reading $a$; $\rho_2$ stays in $p_1$ reading the first two $a$'s then moves to $p_2$ after reading the last $a$; finally $\rho_3$ stays in $p_1$ reading all three $a$'s. The intervals assigned to them are: $I_{\rho_1} = [1,3]$, $I_{\rho_2} = [3,9]$, $I_{\rho_3} = [4,9]$.
\end{example}

\begin{lemma}\label{lemma:smallm}
Let $\A$ be a trim finitely ambiguous singleton SCC NFA with at least two states and fix a word $w \in \Sigma^*$. Let $X \subseteq \Runs_{\A}(w)$ such that the intervals assigned to runs in $X$ are pairwise disjoint, \ie for all $\rho_1, \rho_2 \in X$ if $\rho_1 \neq \rho_2$ then $I_{\rho_1} \cap I_{\rho_2} = \emptyset$. Then $|X| \le 2$.
\end{lemma}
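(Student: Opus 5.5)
The plan is to argue by contradiction and reduce to the IDA criterion of \cref{theorem:ida}. Suppose $X$ contains three distinct runs. Any two of their intervals are disjoint, and each interval satisfies $\ell(\rho)\le r(\rho)$, so we can order the runs as $\rho_1,\rho_2,\rho_3$ with
\[
\ell(\rho_1)\le r(\rho_1)<\ell(\rho_2)\le r(\rho_2)<\ell(\rho_3)\le r(\rho_3).
\]
Write $\ell_i=\ell(\rho_i)$ and $r_i=r(\rho_i)$.

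The key structural observation uses that every SCC is a singleton and there are only self-loops and single steps to the right. A run $\rho$ is therefore exactly in $p_1$ (taking self-loops) during transitions $1,\ldots,\ell(\rho)-1$. It leaves $p_1$ at transition $\ell(\rho)$, and it is in $p_n$ (taking self-loops) during transitions $r(\rho)+1,\ldots,m$. If $n=2$ then $\ell(\rho)=r(\rho)$, which is harmless.

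Now consider the infix $v=w[r_1+1,\ \ell_3-1]$. It is nonempty, because $r_1<\ell_2\le r_2<\ell_3$ means it contains positions $\ell_2,\ldots,r_2$. We establish three runs over $v$:
\begin{itemize}
\item Since $\ell_3-1\ge r_2$, the transitions $r_1+1,\ldots,\ell_3-1$ of $\rho_3$ are all self-loops on $p_1$. Hence $p_1\trans{v}p_1$.
\item Since these positions are all greater than $r_1$, the same transitions of $\rho_1$ are self-loops on $p_n$. Hence $p_n\trans{v}p_n$.
\item Run $\rho_2$ is still in $p_1$ after reading the first $r_1$ letters, since $r_1<\ell_2$. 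It is already in $p_n$ after reading $\ell_3-1\ge r_2$ letters. So its restriction to transitions $r_1+1,\ldots,\ell_3-1$ witnesses $p_1\trans{v}p_n$.
\end{itemize}
Together these give $p_1\trans{v}p_1\trans{v}p_n\trans{v}p_n$ with $p_1\neq p_n$, since there are at least two states. Hence $\A$ complies with IDA, and since $\A$ is trim, \cref{theorem:ida} contradicts finite ambiguity. Therefore $|X|\le 2$.

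The only delicate step is the index bookkeeping: checking that the chosen window $[r_1+1,\ell_3-1]$ lies simultaneously inside the $p_1$-phase of $\rho_3$, inside the $p_n$-phase of $\rho_1$, and contains the whole transit of $\rho_2$ from $p_1$ to $p_n$. The strict inequalities coming from disjointness of the intervals are exactly what guarantee this.
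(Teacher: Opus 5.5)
Your proof is correct and is essentially the paper's argument: from three runs with pairwise disjoint intervals you get the IDA witness $p_1 \trans{v} p_1 \trans{v} p_n \trans{v} p_n$ from $\rho_3$, $\rho_2$, $\rho_1$ respectively, and then invoke \cref{theorem:ida}. The only difference is the choice of window. The paper takes $v = w[\ell_2, r_2]$, which is exactly $\rho_2$'s transit from $p_1$ to $p_n$. Your $v = w[r_1+1, \ell_3-1]$ is a harmless enlargement of that window, and your index bookkeeping for it is sound.
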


\begin{proof}
Towards a contradiction suppose there are $\rho_1, \rho_2, \rho_3 \in \Runs_{\A}(w)$ such that $I_{\rho_1} = [l_1,r_1]$, $I_{\rho_2} = [l_2, r_2]$, and $I_{\rho_3} = [l_3,r_3]$ are pairwise disjoint. Without loss of generality assume that $l_1 \le r_1 < l_2 \le r_2 < l_3 \le r_3$. Let $p_1$ and $p_n$ be the initial and final states in $\A$, respectively. Let $v = w[l_2,r_2]$. We observe that: $p_1 \trans{v} p_1$ (part of $\rho_3$); $p_1 \trans{v} p_n$ (part of $\rho_2$); and $p_n \trans{v} p_n$ (part of $\rho_1$). This shows that $\A$ complies with IDA, which is a contradiction with finite ambiguity.
\end{proof}

For every $n$ we define $A_n \in \N$ as the minimal $N$ such that all trim finitely ambiguous singleton SCC NFA with $n$ states are $N$-ambiguous. The following lemma allows to bound $A_n$ recursively.

\begin{lemma}\label{lemma:Anrecursion}
We have $A_1 \le 1$, $A_2 \le 2$, and $A_n \le 3\sum_{i = 2}^{n-1} A_i \cdot A_{n-i+1}$ for $n > 2$.
\end{lemma}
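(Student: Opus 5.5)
The plan is to combine \cref{lemma:smallm} with the classical piercing property of intervals, and then to split every run at a pierced time through a middle state.

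\textbf{Base cases.} For $n=1$ the automaton has a single state, so by \cref{claim:unambiguous} there is at most one run $p_1\trans{w}p_1$. Hence $A_1\le 1$.

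For $n=2$, each accepting run is determined by the position of its unique bridge. Two different runs have different bridge positions, so their intervals (which are singletons $[s,s]$) are disjoint. \cref{lemma:smallm} then gives $A_2\le 2$. A one-letter example with two states shows that $A_2=2$; this equality is used below.

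\textbf{Piercing.} Fix $n>2$, a trim finitely ambiguous singleton-SCC NFA $\A$ with $n$ states, and a word $w$ of length $m$. By \cref{lemma:smallm}, no three intervals $I_\rho$ are pairwise disjoint. For a family of intervals, the maximum number of pairwise disjoint members equals the minimum number of points meeting all of them. This follows from the standard greedy argument: take $s_1$ to be the smallest right endpoint, discard the intervals containing $s_1$, and repeat. Hence there are points $s_1,s_2$ such that every accepting run $\rho$ satisfies $s_1\in I_\rho$ or $s_2\in I_\rho$. It then suffices to bound the number of runs whose interval contains a fixed point $s$.

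\textbf{Counting runs through a fixed point $s$.} Let $\rho$ be a run with $\ell(\rho)\le s\le r(\rho)$. Since $\ell(\rho)\le s$, after reading $w[1,s]$ the run has already left $p_1$. There are two cases.

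\emph{Case $r(\rho)>s$.} After step $s$ the run is in some state $p_i$ with $2\le i\le n-1$. For fixed $i$, the run is a concatenation of a run $p_1\trans{w[1,s]}p_i$ in the sub-NFA on states $p_1,\dots,p_i$ and a run $p_i\trans{w[s+1,m]}p_n$ in the sub-NFA on states $p_i,\dots,p_n$. Both sub-NFAs are trim singleton-SCC NFAs. They are finitely ambiguous, because an IDA witness in either one would also be an IDA witness in $\A$ (\cref{theorem:ida}). Their sizes are $i$ and $n-i+1$, so the number of such runs is at most $A_i\cdot A_{n-i+1}$.

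\emph{Case $r(\rho)=s$.} Step $s$ is the bridge $(p_{n-1},a_s,p_n)$, after which the run is forced to stay in $p_n$. Its prefix is an accepting run of the sub-NFA on $p_1,\dots,p_{n-1}$ over $w[1,s-1]$. Hence there are at most $A_{n-1}$ such runs.

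So each point $s$ accounts for at most $\sum_{i=2}^{n-1}A_iA_{n-i+1}+A_{n-1}$ runs.

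\textbf{Summing over $s_1,s_2$ and absorbing the extra terms.} Summing over the two points gives
\[
|\Runs_\A(w)|\le 2\sum_{i=2}^{n-1}A_iA_{n-i+1}+2A_{n-1}.
\]
Since $A_2=2$, we have $2A_{n-1}=A_2A_{n-1}$, and this is exactly the $i=n-1$ term of the sum (for $n=3$ it is the only term). The extra term is therefore absorbed, giving
\[
A_n\le 3\sum_{i=2}^{n-1}A_iA_{n-i+1}.
\]

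\textbf{Main obstacle.} The delicate step is this boundary bookkeeping. Runs whose interval is pierced exactly at its right endpoint have no middle state at time $s$ in $\{p_2,\dots,p_{n-1}\}$, so they must be counted separately. I expect the care needed to be in checking that these runs cost only $A_{n-1}$ per point, so that they fit into a third copy of the sum rather than a fourth.
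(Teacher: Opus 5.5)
Your argument is the paper's argument. Your greedy piercing points $s_1,s_2$ are exactly the paper's $r_1,r_2$: by minimality of $r_1$, an interval meets $I_{\rho_1}$ iff it contains $r_1$. The split by the state after step $s$, the separate bound $A_{n-1}$ for runs with $r(\rho)=s$, and the absorption of $2A_{n-1}$ into the sum are also the same.

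There is one step that fails: no one-letter two-state singleton-SCC NFA has ambiguity $2$. Over $\{a\}$ the only possible bridge is $(p_1,a,p_2)$. If both self-loops are present, then $p_1\trans{a}p_1\trans{a}p_2\trans{a}p_2$ is an IDA witness; indeed $a^m$ has $m$ accepting runs. If at least one self-loop is missing, the bridge position is forced, so every word has at most one accepting run.

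So your justification of $A_2\ge 2$ is wrong, and this inequality is genuinely needed. For $n=3$ your bound reads $A_3\le 2A_2^2+2A_2$, and $2A_2^2+2A_2\le 3A_2^2$ holds only if $A_2\ge 2$. For $n\ge 4$ the weaker $A_2\ge 1$ would suffice, since the $i=2$ and $i=n-1$ summands both equal $A_2A_{n-1}$.

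The fix is to use two letters, as the paper does. Take an $a$-loop on $p_1$, a $b$-loop on $p_2$, and bridges from $p_1$ to $p_2$ labelled $a$ and $b$. An accepting run with bridge at position $s$ needs $w[1..s-1]\in a^*$ and $w[s+1..|w|]\in b^*$, which leaves at most two admissible positions $s$. Hence the automaton is $2$-ambiguous, and it has exactly two accepting runs on $ab$. With this witness in place of the one-letter claim, your proof is complete.
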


\begin{proof}
For $n = 1$ this follows from \cref{claim:unambiguous}. For $n=2$ it follows from \cref{lemma:smallm}, one needs to observe that all accepting runs have disjoint intervals. This is because transitions within both SCCs are deterministic, so the only nondeterminism comes from changing the SCC. In fact $A_2=2$: consider the automaton with initial state $p_1$, final state $p_2$, an $a$-loop on $p_1$, a $b$-loop on $p_2$, and transitions labelled $a$ and $b$ from $p_1$ to $p_2$. It has at most two accepting runs on any word and exactly two on $ab$.

Suppose $n >2$ and let $\A = (Q, \Sigma, \delta, p_1, p_n)$ be a trim finitely ambiguous singleton SCC NFA with $n$ states $Q = \set{p_1, \ldots, p_n}$ and $w$ a word such that $|\Runs_{\A}(w)| = A_n$. Choose a run $\rho_1 \in \Runs_{\A}(w)$ such that the right end of its interval is minimal, \ie if $I_{\rho_1} = [l_1,r_1]$ then for all other $I_{\rho} = [l,r] \in \Runs_{\A}(w)$ we have $r_1 \le r$. Let $\Runs_{\A}^{1}(w) \subseteq \Runs_{\A}(w)$ be the set of runs $\rho$ such that $I_{\rho} \cap I_{\rho_1} \neq \emptyset$.


We define $\Runs_{\A}^{2}(w) = \Runs_{\A}(w) \setminus \Runs_{\A}^1(w)$. If this set is empty, set $\Runs_{\A}^{2,i}(w)=\emptyset$ for $2\le i\le n$ and omit the subsequent steps involving $\rho_2$ and $r_2$. Otherwise, choose a run $\rho_2 \in \Runs_{\A}^2(w)$ such that the right end of its interval is minimal, \ie if $I_{\rho_2} = [l_2,r_2]$ then for all other $I_{\rho} = [l,r] \in \Runs^2_{\A}(w)$ we have $r_2 \le r$. Observe that $l_1 \le r_1 < l_2 \le r_2$. By \cref{lemma:smallm} for all $\rho \in \Runs^2_{\A}(w)$ we have $I_{\rho} \cap I_{\rho_2} \neq \emptyset$.

Let $\rho = t_1 \ldots t_{|w|} \in \Runs_{\A}^1(w)$. We decompose the set $\Runs_{\A}^1(w) = \biguplus_{i = 2}^n \Runs_{\A}^{1,i}(w)$ depending on the state reached in position $r_1$. Formally, $\rho \in \Runs_{\A}^{1,i}(w)$ if and only if $t_{r_1} = (p,a,q)$ for $q = p_i$. Note that $i \ge 2$. This is because $I_{\rho} \cap I_{\rho_1} \neq \emptyset$, which by minimality of $r_1$ implies that if $I_{\rho} = [l,r]$ then $l \le r_1$. Similarly, we decompose $\Runs_{\A}^2(w) = \biguplus_{i = 2}^n \Runs_{\A}^{2,i}(w)$ depending on the state reached in position $r_2$.


%

Let $1 \le s \le t \le n$. We write $\A_{s \to t}$ for the automaton $(\set{p_s,\ldots,p_t}, \Sigma, \delta', p_s, p_t)$ with $t-s+1$ states, where $\delta'$ is
inherited from $\delta$. One should think it is like trimming $\A$ to an infix of states bounded by $s$ and $t$.

Let $2 \le i <n$. Observe that every run in $\Runs_{\A}^{1,i}(w)$ can be decomposed into two runs: $\Runs_{\A_{1 \to i}}(w[1..r_1])$ and $\Runs_{\A_{i\to n}}(w[r_1+1..|w|])$. Similarly runs in $\Runs_{\A}^{2,i}(w)$ can be decomposed into: $\Runs_{\A_{1 \to i}}(w[1..r_2])$ and $\Runs_{\A_{i\to n}}(w[r_2+1..|w|])$. We conclude that:
\begin{align*}
|\Runs_{\A}^{1,i}(w)|, |\Runs_{\A}^{2,i}(w)| \le A_i \cdot A_{n-i+1}.
\end{align*}

It remains to bound $|\Runs_{\A}^{1,n}(w)|$ and $|\Runs_{\A}^{2,n}(w)|$. For this we use the minimality of $r_1$ and $r_2$. Note that for every $\rho = t_1\ldots t_{|w|} \in \Runs_{\A}^{1,n}(w)$ we have $t_{r_1} = (p_{n-1},w_{r_1},p_n)$. This also implies that $t_{s} = (p_n,w_s,p_n)$ for all $s > r_1$. We conclude that
$$
|\Runs_{\A}^{1,n}(w)| = |\Runs_{\A_{1\to n-1}}(w[1..r_1-1])| \le A_{n-1}.
$$
Similarly, observe that
$$
|\Runs_{\A}^{2,n}(w)| \le |\Runs_{\A_{1\to n-1}}(w[1..r_2-1])| \le A_{n-1}.
$$
Combining everything we get 
$$
A_n = |\Runs_{\A}(w)| \le 2\sum_{i = 2}^{n-1}A_i \cdot A_{n-i+1} + 2A_{n-1} \le 3\sum_{i = 2}^{n-1}A_i \cdot A_{n-i+1},
$$
as required. For the last inequality observe that $A_{n-1} \cdot A_2=2A_{n-1}$ appears in the sum.
\end{proof}

\begin{proof}[Proof of \cref{theorem:singletons}]
It suffices to estimate $A_n$ for $n > 2$. By \cref{lemma:Anrecursion} we have: $A_n \le 3\sum_{i = 2}^{n-1} A_i \cdot A_{n-i+1}$. 
We show by induction that for $n \ge 2$ we have $A_{n} \le 3^{2n-3}\cdot 2^{n-1} \cdot C_{n-2}$, where $C_n$ is the $n$-th Catalan number. Indeed, for $n = 2$ by \cref{lemma:Anrecursion} we have $A_n \le 2 \le 3^12^1C_0 = 6$. For the inductive step let $n > 2$, then
\begin{align*}
A_n \le 3\sum_{i = 2}^{n-1} A_i \cdot A_{n-i+1} \le 3\sum_{i = 2}^{n-1} 3^{2i - 3} 2^{i-1}C_{i-2} \cdot 3^{2(n-i+1) - 3} 2^{n-i} C_{n-i-1}\\
= 3^{2n-3}2^{n-1} \sum_{i = 0}^{n-3}C_i \cdot C_{n-i-3} = 3^{2n-3}2^{n-1}C_{n-2}.
\end{align*}
This concludes the proof as it is well-known that for Catalan numbers $C_n = 2^{\Oh(n)}$.
\end{proof}

\section{General case}
\label{sec:general}

Fix a trim finitely ambiguous chain NFA $\A = (Q, \Sigma, \delta, p_1, q_k)$ with strongly connected components $Q = Q_1 \uplus Q_2 \uplus \ldots \uplus Q_k$. Let $p_1, q_1 \in Q_1$, \ldots, $p_k,q_k \in Q_k$ be the states characterising the chain. We will denote $|Q| = n$, $|Q_i| = n_i$, in particular $n = \sum_{i = 1}^k n_i$.
Our goal is to prove the general upper bound on the ambiguity of $\A$.

\begin{theorem}\label{theorem:main}
A finitely ambiguous chain NFA with $n$ states is $N$-ambiguous for $N = 2^{\Oh(n)}$.
\end{theorem}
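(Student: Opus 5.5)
We generalise the recursion of \cref{lemma:Anrecursion}, now indexing by the sizes of the SCCs rather than by the number of states. For a chain NFA with SCC sizes $n_1,\ldots,n_k$, let $B(n_1,\ldots,n_k)$ be the maximal ambiguity of trim finitely ambiguous chain NFAs with these sizes, and let $B_n$ be the maximum over all compositions of $n$. The target is a bound $B_n \le c^{n}\cdot C_{n}$ for some constant $c$, which is $2^{\Oh(n)}$ since Catalan numbers grow exponentially.

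\emph{Base cases.} For $k=1$, \cref{claim:unambiguous} gives at most one run. For $k=2$, the only nondeterminism is the choice of bridge position. Fix $w$. By \cref{claim:unambiguous}, once the bridge transition $(q_1,a,p_2)$ and its position $\ell$ are fixed, the run is unique. So we must bound the number of bridge positions $\ell$ used by accepting runs. We intend to show this number is $2^{\Oh(n_1+n_2)}$ by an analogue of \cref{lemma:smallm}, which needs a generalised disjointness argument. Given many accepting runs with bridge positions $\ell_1<\ell_2<\dots$, consider the sets of states occupied at certain positions. Take $U$ to be the set of states in $Q_1$ reached at a given position by the runs still in $Q_1$, and $V$ the set of states in $Q_2$ occupied by the runs already in $Q_2$. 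By pigeonhole on pairs of subsets $(U,V)$, two positions $s<s'$ carry the same pair. The infix $w[s+1..s']$ then satisfies the three conditions of \cref{proposition:idasubset}, contradicting finite ambiguity. This gives at most $2^{n_1+n_2}$ pairwise "disjoint" runs.

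\emph{Disjoint intervals in general.} The same subset-pigeonhole argument should prove the general analogue of \cref{lemma:smallm}. If $X$ is a set of accepting runs with pairwise disjoint intervals $I_\rho$, then $|X|\le 2^{n_1+n_k}$ (or $4^{n}$ at worst). At position $s$, let $U_s\subseteq Q_1$ be the states of the runs in $X$ whose interval lies to the right of $s$, and $V_s\subseteq Q_k$ the states of those whose interval lies to the left. Two positions between consecutive intervals, with some interval strictly between them, have equal pairs; this gives conditions \ref{p:1}--\ref{p:3} with $i=1$, $j=k$. Ensuring condition~\ref{p:1} and condition~\ref{p:2} hold simultaneously needs care: runs entering or leaving must be handled by choosing the positions among the gaps correctly.

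\emph{Recursion.} Greedily pick runs $\rho_1,\rho_2,\ldots,\rho_M$ with minimal right endpoints as in \cref{lemma:Anrecursion}. Since the $I_{\rho_j}$ are pairwise disjoint, $M\le 2^{\Oh(n_1+n_k)}$, and every run intersects some $I_{\rho_j}$. If $\rho$ intersects $I_{\rho_j}=[\ell_j,r_j]$ and, by minimality, has $r\ge r_j$ and $\ell\le r_j$, then at position $r_j$ it sits in some SCC $Q_i$ with $2\le i\le k$, in some state $q\in Q_i$. Split $\rho$ there into a run of $\A_{1\to i}$ ending at $q$ and a run of $\A_{i\to k}$ starting at $q$. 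Both are chain-like after redefining the endpoint states to $q$, which is allowed by \cref{theorem:ida}. Summing over $j$, $i$ and $q\in Q_i$ gives
\[ B(n_1,\ldots,n_k)\le 2^{\Oh(n_1+n_k)}\sum_{i=2}^{k} n_i\, B(n_1..n_i)\,B(n_i..n_k), \]
with $i=k$ handled by a term $B(n_1..n_{k-1})$ as in the singleton case.

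\emph{Main obstacle.} The factor $2^{\Oh(n_1+n_k)}\cdot n_i$ must telescope, and $Q_i$ appears in both halves. Since $n_i\le 2^{n_i}$, we will prove by induction a bound of the form $B(n_1,\ldots,n_k)\le c^{\,n_1+n_k+2\sum_{1<i<k}n_i}\cdot C_{k-2}$. The shared middle SCC contributes twice, exactly as the overlap state does in the singleton recursion. The boundary factors $c^{n_1}$ and $c^{n_k}$ must also absorb the $2^{\Oh(n_1+n_k)}$ overhead; this requires making the endpoint exponent strictly larger than the overhead constant. We expect verifying this bookkeeping to be the delicate step, together with the disjoint-intervals lemma.
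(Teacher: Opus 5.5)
\textbf{There is a genuine gap in the final bookkeeping step.} Several parts of your proposal are fine: the $k=2$ case, the class-free disjoint-intervals lemma, and the split at position $r_j$. For the lemma, choose positions $s=r_t<s'=r_{t'}$; conditions~\ref{p:1} and~\ref{p:2} of \cref{proposition:idasubset} hold as you describe, and $\rho_{t'}$ witnesses condition~\ref{p:3}. The problem is not one of tuning constants. The recursion
\[
B(n_1,\ldots,n_k)\le 2^{\Oh(n_1+n_k)}\Bigl(\sum_{i=2}^{k-1} n_i\,B(n_1,\ldots,n_i)\,B(n_i,\ldots,n_k)+B(n_1,\ldots,n_{k-1})\Bigr)
\]
does not solve to $2^{\Oh(n)}$. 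The first SCC $Q_1$ is again the left boundary of every prefix subproblem, so it pays its factor $2^{n_1}$ at every level of the recursion tree. Already the last term forces this. The best bound $f$ the recursion can give (its solution with equality) satisfies $f(n_1,\ldots,n_k)\ge 2^{n_1+n_k}f(n_1,\ldots,n_{k-1})$, hence $f\ge 2^{(k-1)n_1}$. With $n_1=m$ and $m$ further singleton SCCs, this is $2^{m^2}=2^{n^2/4}$.

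Your ansatz $c^{n_1+n_k+2\sum_{1<i<k}n_i}C_{k-2}$ shows the same thing. In the sum over $2\le i\le k-1$, the powers of $c$ on both sides coincide exactly, so nothing is left to absorb $2^{\Oh(n_1+n_k)}\cdot n_i$. Any slack you could create, for example by weighting interior SCCs more, is in terms of the middle SCCs and never in terms of $n_1$. So enlarging the boundary exponent cannot help, because it appears identically on both sides.

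\textbf{The missing idea} is to make the per-level overhead chargeable to the newly created boundary. The paper counts runs in classes $\Runs^{U,V}(w)$. Here $U\subseteq Q_1$ is the set reachable from $p_1$ just before the run leaves $Q_1$, and $V\subseteq Q_k$ is the set co-reaching $q_k$ just after it enters $Q_k$. The factor $2^{n_1+n_k}$ for the choice of class is paid once, at the top level. Inside a class, condition~\ref{p:1} holds automatically, because the same reachable set $U$ occurs at both positions. Pigeonholing on one side only then gives at most $2^{\min(n_1,n_k)}$ runs with disjoint intervals (\cref{lemma:fulldecomposition}).

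When a run is split at $q\in Q_h$, the prefix lies in a class $(U,V')$ of $\B_{p_1\to q}$ and the suffix in a class $(U',V)$ of $\B_{q\to q_k}$. The sets $U$ and $V$ are inherited, and only $V',U'\subseteq Q_h$ are new, costing $n_h2^{2n_h}$. The overheads of the two subproblems then satisfy $2^{\min(n_1,n_h)}\cdot 2^{\min(n_h,n_k)}\le 2^{2n_h}$. They are charged to $Q_h$, which becomes a boundary only once in the recursion tree. This is where the factor $\prod_h n_h2^{4n_h}$ in the paper's induction comes from, and it yields $2^{\Oh(n)}$.
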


In the following definitions and lemmas, assume $k\ge2$; the one-component case of the theorem follows from \cref{claim:unambiguous}.
For every word $w \in \Sigma^*$ we will decompose the set of accepting runs, fixing some reachable and co-reachable states in the first and last component
$$
\Runs_{\A}(w) = \bigcup_{\emptyset \neq U\subseteq Q_1,\; \emptyset \neq V \subseteq Q_k} \Runs_{\A}^{U,V}(w).
$$
Formally, $\rho \in \Runs_{\A}^{U,V}(w)$ if and only if given the assigned interval $I_{\rho} = [l,r]$ we have
$$
U = \set{q \in Q_1 \mid p_1 \trans{w[1..l-1]} q}, \text{ and } V = \set{p \in Q_k \mid p \trans{w[r+1..|w|]}q_k}.
$$
Note that $U$ and $V$ are nonempty as we have $q_1 \in U$ and $p_k \in V$.
For such sets of runs we generalise \cref{lemma:smallm}.

\begin{lemma}\label{lemma:fulldecomposition}
Let $\A$ be a trim finitely ambiguous chain NFA and let $Q_1$ and $Q_k$ be the first and last SCC, respectively, where $|Q_1| = n_1$ and $|Q_k| = n_k$. Fix $U \subseteq Q_1$ and $V \subseteq Q_k$ and let $X \subseteq \Runs_{\A}^{U,V}(w)$ such that intervals assigned to runs in $X$ are pairwise disjoint. Then $|X| \le 2^{\min (n_1, n_k)}$.
\end{lemma}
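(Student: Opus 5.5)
The plan is to prove the two bounds $|X| \le 2^{n_1}$ and $|X| \le 2^{n_k}$ separately; they are symmetric. For each, assuming the bound fails, we produce sets $W$ and $V'$ and a word $z$ satisfying the three hypotheses of \cref{proposition:idasubset} for the components $Q_1$ and $Q_k$. This contradicts finite ambiguity.

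\textbf{Setup.} Order $X=\set{\rho_1,\ldots,\rho_s}$ so that $l_1\le r_1<l_2\le r_2<\cdots<l_s\le r_s$, where $I_{\rho_j}=[l_j,r_j]$. For a position $t$ define
$$S_t=\set{q\in Q_1 \mid p_1\trans{w[1..t]}q}, \qquad T_t=\set{p\in Q_k\mid p\trans{w[t+1..|w|]}q_k}.$$
Since $Q_1$ is the first SCC and $p_1\in Q_1$, a run reaching a state of $Q_1$ stays in $Q_1$ throughout. Hence $S_{t'}$ is exactly the set of $Q_1$-successors of $S_t$ under $w[t+1..t']$, for $t\le t'$. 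Dually, $T_t$ is the set of $Q_k$-predecessors of $T_{t'}$ under $w[t+1..t']$. By the definition of $\Runs_{\A}^{U,V}(w)$ we have $T_{r_j}=V$ for every $j$, and $S_{l_j-1}=U$.

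\textbf{Bound $2^{n_1}$.} For $j<s$, the run $\rho_{j+1}$ is still in $Q_1$ at position $r_j$, because $l_{j+1}>r_j$. So $S_{r_j}$ is a nonempty subset of $Q_1$. If $s>2^{n_1}$, the $s-1\ge 2^{n_1}$ indices $j<s$ map into at most $2^{n_1}-1$ nonempty subsets. By pigeonhole there are $i<j<s$ with $S_{r_i}=S_{r_j}=:W$. Set $z=w[r_i+1..r_j]$ and $V'=V$. We check the hypotheses of \cref{proposition:idasubset}:
\begin{itemize}
\item Condition~1 holds because $S_{r_j}$ is the set of successors of $S_{r_i}$ under $z$, so every state of $W$ has a $z$-predecessor in $W$.
\item Condition~2 holds because $T_{r_i}=T_{r_j}=V$ and $T_{r_i}$ is the set of $z$-predecessors of $T_{r_j}$, so every $v\in V$ has a $z$-successor in $V$.
\item Condition~3 is witnessed by $\rho_j$. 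At position $r_i<l_j$ it is in some state $u\in S_{r_i}=W$. Reading $z$, it leaves $Q_1$ and enters $p_k$ exactly at position $r_j$, and $p_k\in T_{r_j}=V$ since $\rho_j$ is accepting.
\end{itemize}
This contradicts finite ambiguity, so $s\le 2^{n_1}$.

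\textbf{Bound $2^{n_k}$.} The argument is symmetric. For $j>1$, the set $T_{l_j-1}$ is nonempty, since $\rho_{j-1}$ is already in $Q_k$ at that position. Pigeonhole gives $1<i<j$ with $T_{l_i-1}=T_{l_j-1}=:V'$. We use $W=U$ (as $S_{l_i-1}=S_{l_j-1}=U$) and $z=w[l_i..l_j-1]$. Condition~3 is witnessed by $\rho_i$: it is in $q_1\in U$ at position $l_i-1$, and at position $l_j-1>r_i$ it is in $Q_k$ in a state of $T_{l_j-1}=V'$.

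\textbf{Main obstacle.} The main obstacle is choosing a \emph{single} word $z$ that works for all three hypotheses at once. This is resolved by fixing one side via the definition of $\Runs^{U,V}_{\A}(w)$ and applying pigeonhole only on the other side.
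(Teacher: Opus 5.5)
Your proof is correct and is essentially the paper's. For the $2^{n_k}$ bound the paper uses the same sets $T_{l_i-1}$, the fixed set $U$ and the word $w[l_i..l_j-1]$ in \cref{proposition:idasubset}; for the $2^{n_1}$ bound it only sketches the symmetric argument via reachable sets at positions $r_i$, which you carry out in full (the only blemish is that in the last witness you have $l_j-1\ge r_i$ rather than strict inequality, which is all the argument needs).
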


\begin{proof}
Let $I_{\rho_1} = [l_1,r_1]$, $I_{\rho_2} = [l_2, r_2]$, \ldots $I_{\rho_s} = [l_s,r_s]$, where $\rho_1,\ldots, \rho_s \in X$ and $l_1 \le r_1 < l_2 \le r_2 < \ldots < l_s \le r_s$. It suffices to show that $s \le 2^{n_k}$ and that $s \le 2^{n_1}$.

To prove the first inequality for every $1 \le i \le s$ we define $V_{i} = \set{p \in Q_k \mid p \trans{w[l_i..|w|]}q_k}$. Towards a contradiction assume that $s > 2^{n_k}$. Then there are $1 \le i < j \le s$ such that $V_i = V_j$. Let $x = w[l_i \ldots l_j -1]$. The idea is to invoke \cref{proposition:idasubset} for $U \subseteq Q_1$, $V_i \subseteq Q_k$ and the word $x$; item 3 below also shows that $V_i$ is nonempty. Indeed,
\begin{enumerate}
 \item Let $u \in U$. By definition $U = \set{q \in Q_1 \mid p_1 \trans{w[1..l_j-1]} q}$, so there is a run $\gamma$ witnessing $p_1 \trans{w[1..l_j-1]}u$. But since $U = \set{q \in Q_1 \mid p_1 \trans{w[1..l_i-1]} q}$ it must be that $\gamma = \gamma_1 \gamma_2$, where $\gamma_1$ is a run $p_1 \trans{w[1..l_i-1]} u'$ with $u' \in U$. Then $\gamma_2$ witnesses $u' \trans{x} u$ as required.
 \item Let $v \in V_i = \set{p \in Q_k \mid p \trans{w[l_i..|w|]}q_k}$ so there is a run $\gamma$ witnessing $v \trans{w[l_i \ldots |w|]}q_k$. Since $V_i = V_j$ we have $\gamma = \gamma_1\gamma_2$, where $\gamma_2$ witnesses $v' \trans{w[l_j\ldots |w|]} q_k$ for $v' \in V_i$. Thus $\gamma_1$ witnesses $v \trans{x} v'$.
 \item Consider the run $\rho_i$ and recall that its infix witnesses $q_1 \trans{w[l_i..r_i]}p_k$. Since $r_i < l_j$, the same run witnesses $q_1 \trans{x}p$ for some $p \in Q_k$. Its remaining suffix shows that $p \in V_j = V_i$, and $q_1 \in U$ by definition.
\end{enumerate}
By \cref{proposition:idasubset} $\A$ is not finitely ambiguous, which is a contradiction.

The proof that $s \le 2^{n_1}$ is analogous. One must define reachable sets $U_i \subseteq Q_1$ for words $w[1..r_i]$ and conclude that these sets are pairwise distinct.
%
\end{proof}

We now define an analog of the sequence $A_n$ in \cref{sec:singleton}. It requires to be more detailed as we need to take into account the sizes of SCCs.
For this we will need to generalise the definition of automata trimmed to some of the connected components from \cref{sec:singleton}.
Consider a trim finitely ambiguous chain NFA $\A = (Q, \Sigma, \delta, p_1, q_k)$ with strongly connected components $Q = Q_1 \uplus Q_2 \uplus \ldots \uplus Q_k$.
Given $1 \le i \le j \le k$, $p \in Q_i$, and $q \in Q_j$ we define $\A_{p \to q} = (\bigcup_{h = i}^j Q_h,\Sigma,\delta',p,q)$, where $\delta'$ is inherited from $\delta$. Essentially we choose new initial and final states and trim the previous and latter components.

\begin{definition}\label{definition:Nij}
Fix $k\ge2$ and $n_1, \ldots, n_k > 0$ and denote the vector $\vec{n} = (n_1,\ldots,n_k)$.
Consider all trim finitely-ambiguous chain NFA $\B$ such that their SCCs are ordered $Q_1$, \ldots, $Q_k$ and $|Q_i| = n_i$ for all $1 \le i \le k$. We define $N_{\vec{n}}[1,k]$ to be the maximal value of $|\Runs^{U,V}_{\B}(w)|$ among all such automata $\B$, words $w$ over the alphabet of $\B$, and nonempty subsets $U \subseteq Q_1$, $V \subseteq Q_k$.

Furthermore, for every $1 \le i < j \le k$ we define $N_{\vec{n}}[i,j]$ as the maximal value of $|\Runs^{U,V}_{\B_{p \to q}}(w)|$ among all such automata $\B$, words $w$ over the alphabet of $\B$, nonempty subsets $U \subseteq Q_i$, $V \subseteq Q_j$, and $p \in Q_i$, $q \in Q_{j}$.
When clear we drop the subscript $\vec{n}$ and simply write $N[i,j]$.
\end{definition}

The notation $N_{\vec{n}}[i,j]$ is for convenience. By definition it is equivalent to $N_{\vec{m}}[1,j-i+1]$, where $\vec{m} = (n_i,\ldots,n_j)$.
Like in \cref{lemma:Anrecursion} we show a recursive bound on $N[1,k]$.

\begin{lemma}\label{lemma:recursion}
Fix $k\ge2$ and $n_1, \ldots, n_k > 0$ and let $\vec{n} = (n_1,\ldots,n_k)$.
If $k = 2$ then $N_{\vec{n}}[1,2] \le 2^{\min(n_1, n_2)}$. If $k>2$ then $N_{\vec{n}}[1,k] \le 2 \cdot 2^{\min(n_1,n_k)} \cdot \sum_{h=2}^{k-1} n_{h} \cdot 2^{2 n_{h}} \cdot N_{\vec{n}}[1,h]\cdot N_{\vec{n}}[h,k]$.
\end{lemma}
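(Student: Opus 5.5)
For $k=2$ I would apply \cref{lemma:fulldecomposition} directly. In a chain NFA with two components, an accepting run is fully determined by its single bridge position $\ell(\rho)=r(\rho)$. The reason is that within $Q_1$ the prefix run $p_1\trans{}q_1$ is unique by \cref{claim:unambiguous}, and likewise the suffix run in $Q_2$ is unique. Hence the intervals $I_\rho=[\ell,\ell]$ of distinct runs are distinct singletons, and so pairwise disjoint. Therefore $|\Runs^{U,V}(w)|\le 2^{\min(n_1,n_2)}$.

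For $k>2$ I would imitate the proof of \cref{lemma:Anrecursion}. Fix $\B$, $w$, $U$, $V$ and put $R=\Runs^{U,V}_\B(w)$. Greedily choose runs $\rho_1,\rho_2,\dots$ as follows: $\rho_1$ has minimal right endpoint $r_1$; $\rho_2$ has minimal right endpoint among the runs whose interval is disjoint from $I_{\rho_1}$; and so on. The chosen intervals are pairwise disjoint, so \cref{lemma:fulldecomposition} gives at most $2^{\min(n_1,n_k)}$ of them. Every run in $R$ intersects some $I_{\rho_j}$, and by minimality its left endpoint satisfies $\ell\le r_j\le r$. It therefore suffices to show that for each fixed $j$, the number of runs $\rho\in R$ with $\ell(\rho)\le r_j\le r(\rho)$ is at most $2\sum_{h=2}^{k-1} n_h 2^{2n_h} N[1,h]N[h,k]$. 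I would prove this with $r_j-1$ and $r_j$ treated separately, which accounts for the factor $2$. Concretely, at time $r_j$ such a run has left $Q_1$ and has not yet finished entering $Q_k$: either it sits at time $r_j$ in some middle component $Q_h$ with $2\le h\le k-1$, or it is taking the bridge into $p_k$ exactly at $r_j$, in which case I look at the position $r_j-1$ instead, where it is in a middle component.

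Next, group the runs by the middle component $h$ and the state $s\in Q_h$ occupied at the cut time $c$; this accounts for the factor $n_h$. Each such run splits into a prefix, which is an accepting run of $\B_{p_1\to s}$ on $w[1..c]$, and a suffix, which is an accepting run of $\B_{s\to q_k}$ on $w[c+1..]$. These are chain-type automata with components $Q_1,\dots,Q_h$ and $Q_h,\dots,Q_k$. I need to count the prefixes by $N[1,h]$, but that quantity counts runs in a class $\Runs^{U,V'}$, where $V'\subseteq Q_h$ is the set of states from which the last component of the prefix automaton can reach the endpoint. So I would further classify the prefix by the pair $(U,V')$: here $U$ is fixed, and $V'\subseteq Q_h$ contributes a factor $2^{n_h}$. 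Similarly I would classify the suffix by $(U'',V)$ with $U''\subseteq Q_h$, contributing another $2^{n_h}$. The sets $V'$ and $U''$ are determined by $w$, $s$ and the entry and exit times in $Q_h$, and they vary with the run. Bounding each class by $N[1,h]$ and $N[h,k]$ respectively gives $n_h 2^{2n_h}N[1,h]N[h,k]$ for each $h$ and each cut choice.

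I expect the main obstacle to be the technical compatibility with \cref{definition:Nij}, where $\B_{p\to q}$ has start state $p$ and end state $q$ rather than the characterising states $p_h,q_h$. In particular, the prefix's last component $Q_h$ ends at $s$, not at $q_h$, so the prefix automaton is not literally a chain NFA in the earlier sense. The definition of $N[i,j]$ allows arbitrary $p\in Q_i$ and $q\in Q_j$ precisely to cover this, and I would check that the interval and $U,V$ notions still make sense for such automata. I would also verify that the prefix's first-component set is indeed the fixed $U$, which holds because its left endpoint coincides with $\ell(\rho)$. Finally, I would confirm that the map sending a run to the tuple (cut choice, $h$, $s$, prefix, suffix) is injective, which is immediate once the cut time is fixed.
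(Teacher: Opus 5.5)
Your proposal is correct and follows the paper's proof essentially step for step. It uses the same $k=2$ argument via \cref{claim:unambiguous} and \cref{lemma:fulldecomposition}, the same greedy minimal-right-endpoint decomposition into at most $2^{\min(n_1,n_k)}$ groups, and the same split at the cut time into classes indexed by $h$, the state $s\in Q_h$, and the two subsets of $Q_h$, each bounded by $N[1,h]\,N[h,k]$.

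The only variation is the case $r(\rho)=r_j$. The paper uses uniqueness of the suffix inside $Q_k$ to get the bound $2^{n_{k-1}}N[1,k-1]$. You instead cut one step earlier at $q_{k-1}$ and bound these runs directly by the $h=k-1$ summand. Both routes produce the factor $2$.
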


\begin{proof}
For the case $k=2$ consider the automaton $\A = (Q_1 \cup Q_2, \Sigma, \delta, p_1, q_2)$ with two SCCs $Q_1$ and $Q_2$ of sizes $n_1$ and $n_2$, respectively. Let $p_1,q_1 \in Q_1$ and $p_2,q_2 \in Q_2$ be the states characterising the chain. Observe that for every $\rho \in \Runs^{U,V}_{\A}(w)$ the assigned interval is of the form $I_{\rho} = [a,a]$. We prove that if $\rho, \rho' \in \Runs^{U,V}_{\A}(w)$ and $I_{\rho} = I_{\rho'} = [a,a]$ then $\rho = \rho'$. Indeed, otherwise there are either two different runs witnessing $p_1 \trans{w[1..a-1]} q_1$ or two different runs witnessing $p_2 \trans{w[a+1..|w|]}q_2$. This is a contradiction by \cref{claim:unambiguous}. Thus the bound $2^{\min(n_1, n_2)}$ follows directly from \cref{lemma:fulldecomposition}.

We move to the most interesting case $k > 2$.
Let $\B$, $w$, $U \subseteq Q_1$, $V \subseteq Q_k$ be such that $|\Runs^{U,V}_{\B}(w)| = N[1,k]$.
Denote $Q_1$, $Q_{2}$, \ldots $Q_k$ the consecutive SCCs in $\B$ of sizes $n_1$, \ldots, $n_k$, respectively. And let $p_1 \in Q_1$ and $q_k \in Q_k$ be the initial and final states, respectively.

We decompose $\Runs^{U,V}_{\B}(w) = \biguplus_{s = 1}^m\Runs^{U,V,s}_{\B}(w)$ as follows. Let $\rho_1 \in \Runs^{U,V}_{\B}(w)$ such that it has the smallest right endpoint in the assigned interval $I_{\rho_1} = [l_1,r_1]$. Formally, for all $\rho \in \Runs^{U,V}_{\B}(w)$ if $I_{\rho} = [l,r]$ then $r_1 \le r$. Then $\rho \in \Runs^{U,V,1}_{\B}(w)$ if $I_{\rho} \cap I_{\rho_1} \neq \emptyset$. Repeat this procedure for consecutive $s > 1$: among the remaining runs choose $\rho_s \in \Runs^{U,V}_{\B}(w) \setminus \bigcup_{s' < s}\Runs^{U,V,s'}_{\B}(w)$ with the smallest right endpoint in the interval $I_{\rho_s} = [l_s,r_s]$; and define $\Runs^{U,V,s}_{\B}(w)$ as all remaining runs $\rho$ such that $I_{\rho} \cap I_{\rho_s} \neq \emptyset$. Observe that by \cref{lemma:fulldecomposition} we have $m \le 2^{\min(n_1,n_k)}$.

We now decompose further each $\Runs^{U,V,s}_{\B}(w) = \biguplus_{h = 2}^{k} \Runs^{U,V,s,h}_{\B}(w)$ depending on which SCC the run visits after reading $r_s$ letters. Let $\rho = t_1 \ldots t_{|w|} \in \Runs^{U,V,s}_{\B}(w)$ and let $t_{r_s} = (p,w_{r_s},q)$. Then $\rho \in \Runs^{U,V,s,h}$ if and only if $q \in Q_{h}$. Note that $h \ge 2$. This is because $I_\rho \cap I_{\rho_s} \neq \emptyset$, which by minimality of $r_s$ implies that if $I_{\rho} = [l,r]$ then $l \le r_s$.

We now bound $\Runs^{U,V,s,h}_{\B}(w)$ for the case $2 \le h < k$. The idea is to split the run into a prefix ending in $Q_h$ and a suffix starting there. 
For this we need to compute new $U'$ and $V'$.
Given $\rho = t_1 \ldots t_{|w|} \in \Runs^{U,V,s,h}_{\B}(w)$ let $a \le r_s < b$ be the indices of the bridges entering and exiting $Q_h$, \ie $t_{a} = (q_{h-1},w_{a},p_h)$ and $t_{b} = (q_h, w_{b}, p_{h+1})$.
Let $q \in Q_h$ be such that $t_{r_s} = (p,w_{r_s},q)$, and define
\[
V' = \set{p \in Q_h \mid p \trans{w[a+1..r_s]}q},
\qquad
U' = \set{p \in Q_h \mid q \trans{w[r_s+1..b-1]}p}.
\]
The run $\rho$ shows that $p_h \in V'$ and $q_h \in U'$, so both sets are nonempty.

Then we decompose $\rho = \rho_1\rho_2$, where: $\rho_1 \in \Runs^{U,V'}_{\B_{p_1 \to q}}(w[1..r_s])$ and $\rho_2 \in \Runs^{U',V}_{\B_{q \to q_k}}(w[r_s+1..|w|])$. The number of choices for $V'$ and $U'$ are both bounded by $2^{n_h}$, and the number of choices for $q$ is $n_h$. Altogether the bound on the number of $\rho_1$ and $\rho_2$ gives us
$$
|\Runs^{U,V,s,h}_{\B}(w)| \le n_h2^{2n_h} \cdot N[1,h]\cdot N[h,k].
$$

It remains to consider $h = k$. Let $\rho = t_1 \ldots t_{|w|} \in \Runs^{U,V,s,k}_{\B}(w)$ and $I_\rho=[l,r]$. Minimality of $r_s$ gives $r \ge r_s$, while $h=k$ gives $r \le r_s$. Thus $r=r_s$ and $t_{r_s} = (q_{k-1}, w_{r_s},p_k)$. Furthermore, $t_{r_s+1} \ldots t_{|w|}$ witnesses a run $p_k \trans{w[r_s+1..|w|]}q_k$. By \cref{claim:unambiguous} it is a unique run, so runs in $\Runs^{U,V,s,k}_{\B}(w)$ differ only in the prefix $p_1 \trans{w[1..r_s-1]} q_{k-1}$. Every such prefix belongs to $\Runs^{U,W}_{\B_{p_1 \to q_{k-1}}}(w[1..r_s-1])$ for some nonempty $W\subseteq Q_{k-1}$. There are at most $2^{n_{k-1}}$ choices of $W$, and for each choice there are at most $N[1,k-1]$ prefixes. Hence
\[
|\Runs^{U,V,s,k}_{\B}(w)| \le 2^{n_{k-1}}N[1,k-1].
\]

Combining everything we get
\begin{align*}
N[1,k] &= |\Runs^{U,V}_{\B}(w)| = \sum_{s=1}^{m} \sum_{h=2}^{k} |\Runs^{U,V,s,h}_{\B}(w)| \\
&\le m \left(\sum_{h=2}^{k-1} n_h2^{2n_h} \cdot N[1,h]\cdot N[h,k] + 2^{n_{k-1}}N[1,k-1] \right) \\
&\le 2m \cdot \sum_{h=2}^{k-1} n_h2^{2n_h} \cdot N[1,h]\cdot N[h,k],
\end{align*}
where the last inequality follows by bounding the additional term by the summand for $h=k-1$: we have $N[k-1,k]\ge1$ and $2^{n_{k-1}}\le n_{k-1}2^{2n_{k-1}}$. We conclude by \cref{lemma:fulldecomposition} as recall that $m\le 2^{\min(n_1,n_k)}$.
\end{proof}

\begin{proof}[Proof of \cref{theorem:main}]
Let $\A$ be a trim finitely ambiguous chain NFA, where $n_1$, \ldots $n_k$ are the sizes of its consecutive SCCs and $n = \sum_{i=1}^k n_i$. Let $Q_1$ and $Q_k$ be the first and last SCC, respectively. We assume that $k > 1$, otherwise we are done by \cref{claim:unambiguous}. Denote $\vec{n} = (n_1,\ldots,n_k)$.
Recall that $\Runs_{\A}(w) = \bigcup_{U \subseteq Q_1} \bigcup_{V \subseteq Q_k} \Runs^{U,V}_{\A}(w)$ and that $|\Runs^{U,V}_{\A}(w)| \le N_{\vec{n}}[1,k]$ for all words $w$ and choices of $U$ and $V$. 
Given that there are at most $2^{n_1+n_k}$ choices for $U$ and $V$ it suffices to show that $N_{\vec{n}}[1,k] \le 2^{\Oh(n)}$.

First we show by induction that 
\begin{align}\label{eq:induction}
N[i,j] \le 2^{2(j-i) - 1} \cdot 2^{\min(n_i,n_j)} \cdot C_{j-i-1} \cdot \prod_{h = i}^{j-1} N[h,h+1] \cdot \prod_{h = i+1}^{j-1} \bigl(n_h \cdot 2^{4n_h}\bigr),
\end{align}
where $C_{j-i-1}$ is the ($j-i-1$)-th Catalan number.
The induction is on $j-i$. Recall that $N_{\vec{n}}[i,j]$ is convenient notation for $N_{(n_i,\ldots,n_j)}[1,j-i+1]$ so we can apply \cref{lemma:recursion}.

The base case $j-i = 1$ follows directly from \cref{lemma:recursion} for $k=2$. For $j-i > 1$ we have by \cref{lemma:recursion}
\begin{align}\label{eq:nij}
N[i,j] \le 2 \cdot 2^{\min(n_i,n_j)} \cdot \sum_{h=i+1}^{j-1} n_{h} \cdot 2^{2 n_{h}} \cdot N[i,h]\cdot N[h,j].
\end{align}
By the induction assumption
\begin{align*}
N[i,h]\cdot N[h,j] \le \left( 2^{2(h-i) - 1} \cdot 2^{\min(n_i,n_h)} C_{h-i-1} \cdot \prod_{s = i}^{h-1} N[s,s+1] \cdot \prod_{s = i+1}^{h-1} n_{s} \cdot \prod_{s=i+1}^{h-1} 2^{4n_s}\right) \\
\cdot \left( 2^{2(j-h) - 1} \cdot 2^{\min(n_h,n_j)} C_{j-h-1} \cdot \prod_{s = h}^{j-1} N[s,s+1] \cdot \prod_{s = h+1}^{j-1} n_{s} \cdot \prod_{s=h+1}^{j-1} 2^{4n_s} \right) \\
= C_{h-i-1} C_{j-h-1} 2^{2(j-i) -2} \cdot 2^{\min(n_i,n_h) + \min(n_h,n_j)} \cdot \prod_{s = i}^{j-1} N[s,s+1] \cdot \prod_{s \in [i+1..j-1]\setminus \set{h}} n_{s} 2^{4n_s}.
\end{align*}
Observe that $2^{\min(n_i,n_h) + \min(n_h,n_j)} \le 2^{2n_h}$. Applying this in \cref{eq:nij} we get
\begin{align*}
N[i,j] \le 2^{2(j-i) - 1} \cdot 2^{\min(n_i,n_j)} \cdot \prod_{h = i}^{j-1} N[h,h+1] \cdot \prod_{h = i+1}^{j-1} \bigl(n_h \cdot 2^{4n_h}\bigr) \cdot \sum_{h=i+1}^{j-1} C_{h-i-1} C_{j-h-1}.
\end{align*}

To conclude let us shift indices in the sum by $i+1$ and denote $s = j-i-1$. Then the sum becomes $\sum_{h=0}^{s-1} C_{h} C_{s-1-h}$, which is equal to $C_{s} = C_{j-i-1}$ as required.

Consider \cref{eq:induction} for $i = 1$ and $j = k$.
We state three simple observations.
\begin{enumerate}
 \item\label{eq:1} $\prod_{h = 1}^{k-1} N[h,h+1] \le 2^{\sum_{h=1}^{k-1}\min(n_h,n_{h+1})} \le 2^{n_1 + \ldots + n_{k-1}} \le 2^n$.
 \item\label{eq:2} $\prod_{h = 1}^k n_{h} \le \prod_{h = 1}^k 2^{n_{h}} = 2^{n_1 + \ldots +n_k} = 2^{n}$.
 \item\label{eq:3} $C_k \le 4^k \le 4^n$.
\end{enumerate}
\cref{eq:1} follows from \cref{lemma:recursion} for $k=2$.
We conclude that $N[1,k] \le 2^{\Oh(n)}$.
\end{proof}

As a consequence, we obtain an upper bound for finite semigroups generated by zero-one matrices.
Matrix products below use ordinary integer addition and multiplication.

\begin{theorem}\label{theorem:matrices}
Let $n\ge1$ and let $S$ be a finite semigroup generated by $n\times n$ zero-one matrices.
Then every $A\in S$ satisfies
\[
 \sum_{p,q=1}^n A_{pq}\le 2^{\Oh(n)},
\]
where the implicit constant is independent of the generators.
\end{theorem}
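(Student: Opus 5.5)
We identify the generators $M_1,\ldots,M_k$ with the letters $a_1,\ldots,a_k$ of an alphabet $\Sigma$. We then build an NFA $\A$ with state set $[1,n]$ and transitions $(p,a_i,q)\in\delta$ iff $(M_i)_{pq}=1$. For a word $w=a_{i_1}\cdots a_{i_m}$ with $m\ge1$, a routine induction on $m$ gives that $(M_{i_1}\cdots M_{i_m})_{pq}$ equals the number of runs over $w$ from $p$ to $q$. Hence for $A=M_{i_1}\cdots M_{i_m}\in S$ the sum $\sum_{p,q}A_{pq}$ is the number of all runs over $w$ (from any state to any state). This is exactly $|\Runs_{\A'}(w)|$ for the NFA $\A'=(Q,\Sigma,\delta,Q,Q)$ with all states initial and final. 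Since $\A'$ has every state initial and final, it is trivially trim.

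Next we show that $\A'$ is finitely ambiguous. Suppose it is not. Then by \cref{theorem:ida} it complies with IDA: there are $p\neq q$ and $v$ with $p\trans{v}p\trans{v}q\trans{v}q$. Since $p \trans{v} p$ requires a run, $v$ is either nonempty or we must have $p=q$; so $v\neq\epsilon$. For every $j$ the number of runs from $p$ to $q$ over $v^{j}$ is at least $j-1$. Indeed, for $1\le t\le j-1$ we loop $t-1$ times at $p$, take the $p\trans{v}q$ run, and loop at $q$ for the rest, and these runs are pairwise distinct. The reason is that the position at which a run leaves the state $p$ for the last time at a block boundary differs, and the runs revisit $p$ only through loops at $p$. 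If two such runs coincided, then $p$ and $q$ would lie in a common run segment $q\trans{*}p$, which would let us apply \cref{claim:unambiguous}-style reasoning; more simply, we can count the runs by the boundary index at which the run is last in $p$. Consequently the entry $(p,q)$ of the matrix corresponding to $v^j$ is unbounded in $j$. That contradicts finiteness of $S$, because finitely many matrices have bounded entries.

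So $\A'$ is finitely ambiguous with $n$ states. We apply \cref{lemma:chain} to write $|\Runs_{\A'}(w)|=\sum_{i=1}^{N}|\Runs_{\A_i}(w)|$ with $N\le 5^{n/2}$ finitely ambiguous chain NFAs, each with at most $n$ states. By \cref{theorem:main} each summand is at most $2^{\Oh(n)}$, where the constant is absolute and so independent of the generators. Hence the total is at most $5^{n/2}\cdot 2^{\Oh(n)}=2^{\Oh(n)}$.

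The main obstacle is the distinctness argument in the second step. If the runs constructed there are not obviously distinct, a cleaner route is to avoid it altogether. One can cite the known equivalence between finiteness of the generated semigroup and finite ambiguity, via \cite[Lemma 5.1, Corollary 5.4]{MandelS77} as mentioned in the introduction. Alternatively, we can argue directly: let $t$ be the last index at which the run is in state $p$ at a block boundary. Two runs with different $t$ are distinct because one of them is in state $q\neq p$ at that boundary. This gives $j-1$ distinct runs, as required.
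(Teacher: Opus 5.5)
Your proof is correct and follows the paper's route: the same NFA with all states initial and final, the identification of run counts on nonempty words with entry sums of products of generators, and then \cref{lemma:chain} together with \cref{theorem:main}. The one difference is how you establish finite ambiguity: you argue by contradiction through the hard direction of \cref{theorem:ida} plus a pumping argument on $v^j$, whereas the paper reads it off the definition, since on nonempty words the number of accepting runs is the entry sum of an element of the finite set $S$ and the empty word has exactly $n$ accepting runs; your block-boundary distinctness argument is fine, but drop the aside invoking \cref{claim:unambiguous}, because that claim presupposes the finite ambiguity you are assuming fails.
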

\begin{proof}
Let $\A=\{A_1,\ldots,A_m\}$ be the set of generators.
Construct an NFA $\B$ with state set $Q=\{1,\ldots,n\}$, alphabet $\{a_1,\ldots,a_m\}$, and initial and final sets both equal to $Q$.
Include the transition $(p,a_i,q)$ if and only if $(A_i)_{pq}=1$.
For every nonempty word $w=a_{i_1}\cdots a_{i_\ell}$, matrix multiplication counts runs, so
\[
 |\Runs_{\B}(w)|=\sum_{p,q=1}^n(A_{i_1}\cdots A_{i_\ell})_{pq}.
\]
Since $S$ is finite and the empty word has exactly $n$ accepting runs, $\B$ is finitely ambiguous.
By \cref{lemma:chain,theorem:main}, it has at most $2^{\Oh(n)}$ accepting runs on every word.
Every element of $S$ is a nonempty product of generators, which gives the claimed bound.
\end{proof}

\section{Lower bounds}
\label{sec:lower}

Weber and Seidl~\cite[Theorem 5.1, Table 1]{WeberS91} give a lower bound of $2^{1.0221n}$ on the maximal finite ambiguity of $n$-state NFAs, for $n\equiv 0\pmod{64}$.
Their examples are chain NFAs with singleton SCCs over five letters; they attribute the construction to Lemma~5.7 of Weber's PhD thesis.
We improve this lower bound to $\Omega((1+\sqrt{2})^n)$.
Our automata still have five letters and singleton SCCs, but they are not chain NFAs as some transitions skip a state.

Let $P_0=0$, $P_1=1$, and $P_{m+2}=P_m+2P_{m+1}$ be the Pell numbers.
\begin{theorem}\label{theorem:lower}
For every $n\ge 3$, there is a trim finitely ambiguous NFA $\A_n$ with $n$ states, singleton SCCs, and an alphabet of five letters, and a word $u_n$ such that
\[
  |\Runs_{\A_n}(u_n)|\ge 2P_{n-2}
  =\frac{(1+\sqrt{2})^{n-2}-(1-\sqrt{2})^{n-2}}{\sqrt{2}}.
\]
In particular, the maximal ambiguity is $\Omega((1+\sqrt{2})^n)$, where $\log_2(1+\sqrt{2})=1.271553\ldots$.
\end{theorem}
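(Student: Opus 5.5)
We plan to give an explicit family and count runs via a linear recurrence that matches the Pell recurrence $P_{m+2}=2P_{m+1}+P_m$. The shape of the recurrence suggests the construction. Whenever a run is "at the front" of the remaining word, it should have exactly two ways to advance by one state (two parallel forward transitions, or a forward transition followed by a forced adjustment) and exactly one way to advance by two states (a skip transition $p_i\to p_{i+2}$). So we take states $p_1,\dots,p_n$ with initial state $p_1$ and final state $p_n$. Every transition will be a self-loop, a step $p_i\to p_{i+1}$, or a skip $p_i\to p_{i+2}$, so the SCCs are singletons. The five letters are distributed as follows:
\begin{itemize}
\item two ``loop'' letters $a,b$, with $p_i$ looping on $a$ for odd $i$ and on $b$ for even $i$, so that neighbouring states never share a loop letter;
\item two ``step'' letters $c,d$ labelling the transitions $p_i\to p_{i+1}$;
\item one ``skip'' letter $e$ labelling the transitions $p_i\to p_{i+2}$.
\end{itemize}
The word $u_n$ will be a concatenation of $n-2$ gadgets $g_1\cdots g_{n-2}$ built from these letters, plus short boundary pieces. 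We design each gadget so that a run entering it in $p_i$ leaves it in $p_{i+1}$ in exactly two ways, or in $p_{i+2}$ in exactly one way. The gadget must also synchronise positions, so that after $g_j$ a surviving run sits in a state with a fixed parity pattern and the next gadget applies uniformly.

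Given such gadgets, let $R(i)$ denote the number of accepting runs on the remaining suffix from state $p_i$. Then $R(i)=2R(i+1)+R(i+2)$, and the boundary conditions $R(n)=1$ and $R(n-1)=2$ (or similar) give $|\Runs_{\A_n}(u_n)|=2P_{n-2}$ by induction on $n-i$. Should the gadget idea be awkward, the fallback is to prove the count by exhibiting an injection from Pell-type tilings into accepting runs. Concretely, these are tilings of a length-$(n-2)$ board by two colours of squares and one domino, where the squares correspond to the two one-step options and the domino to the skip. The closed form $\frac{(1+\sqrt2)^{n-2}-(1-\sqrt2)^{n-2}}{\sqrt2}$ is then the standard Binet formula, and it yields $\Omega((1+\sqrt2)^n)$.

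The remaining obligations are that $\A_n$ is trim and finitely ambiguous. Trimness is immediate from the path structure. For finite ambiguity we use \cref{theorem:ida}: with singleton SCCs, an IDA witness $p\trans{w}p\trans{w}q\trans{w}q$ with $p\neq q$ forces $w$ to consist only of loop letters of $p$ and only of loop letters of $q$. Hence $p$ and $q$ must share a loop letter, and moreover $w$ must also label a path from $p$ to $q$. Since the path from $p$ to $q$ uses at least one step or skip letter, which is not a loop letter anywhere, this is impossible as soon as loop letters are disjoint from step and skip letters.

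This is the main tension, and we expect it to be the main obstacle. If loops only use $a,b$, then gadgets built so that runs can linger require loop letters in the word, while moves require $c,d,e$. Making the counts come out exactly $2$ and $1$ in each gadget, with no extra runs that loop through a gadget or skip across two gadgets at once, is the delicate part. We would first try gadgets of the form $g_j=x_j\,y_j$, where $x_j\in\{c,d,e\}$-patterns force the move and $y_j$ is the loop letter of the intended target parity. The loop letter kills runs whose parity is wrong, which is what gives the synchronisation, and the case analysis of which runs survive each gadget is where the real verification lies.
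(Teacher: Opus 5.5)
\textbf{There is a genuine gap.} It is the tension you flag at the end, but it is fatal rather than delicate.

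\textbf{Your automaton is deterministic.} With your letter assignment, each letter has at most one transition out of each state: $a,b$ only label self-loops, $c,d$ only label steps $p_i\to p_{i+1}$, and $e$ only labels skips $p_i\to p_{i+2}$. So $\A_n$ is a partial DFA and has at most one accepting run on every word. No choice of gadgets $g_j=x_jy_j$ and no injection from tilings can produce $2P_{n-2}$ runs.

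Relaxing the letter roles does not help as long as loop letters are disjoint from move letters. The positions at which a run moves are then dictated by the word, so a run is determined by its sequence of displacements in $\{1,2\}$. Hence there are at most as many accepting runs as compositions of $n-1$ into parts $1$ and $2$, which is the Fibonacci number $F_n$; already $F_4=3<4=2P_2$.

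In particular, ``two ways to advance by one state'' inside a gadget cannot happen: both runs would make exactly one move, at the unique move-letter position. The disjointness that makes your IDA argument a one-liner is exactly what removes all nondeterminism. Separately, $R(i)=2R(i+1)+R(i+2)$ is not well defined as stated. After a gadget, runs in $p_{i+1}$ and in $p_{i+2}$ read the same suffix, so the count depends on the position in the word as well as on the state.

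\textbf{The missing idea.} The same letter must be a self-loop at some states and a move at others, so that the \emph{moment} of moving is nondeterministic. Finite ambiguity then needs a genuinely local argument.

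The paper uses, for each $1\le i\le n-2$, fresh letters $a_i,b_i$ with:
\begin{itemize}
\item loops $q_i\trans{a_i}q_i$, $q_{i+1}\trans{a_i,b_i}q_{i+1}$, $q_{i+2}\trans{b_i}q_{i+2}$;
\item moves $q_i\trans{b_i}q_{i+1}$, $q_{i+1}\trans{a_i}q_{i+2}$, $q_i\trans{a_i,b_i}q_{i+2}$.
\end{itemize}
On $a_1b_1\cdots a_{n-2}b_{n-2}$, the run counts $(x,y)$ at the two-state front $q_i,q_{i+1}$ become $(x+y,2x+y)$ at $q_{i+1},q_{i+2}$; lagging runs can wait on loops. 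This linear map has dominant eigenvalue $1+\sqrt2$ and gives $2P_{n-2}$ accepting runs.

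Only then is the alphabet reduced, via $a_i\mapsto c_i$ and $b_i\mapsto c_{i+1}$ with indices modulo $5$. Loops at $q_j$ now carry only $L_j=\{c_{j-1},c_j\}$. The outgoing labels at $q_{j+2}$ are $\{c_{j+1},c_{j+2},c_{j+3}\}$, which are disjoint from $L_j$ precisely because there are five letters. So a run from $q_j$ on $L_j^*$ stays in $\{q_j,q_{j+1},q_{j+2}\}$.

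An IDA witness $q_j\trans{v}q_j\trans{v}q_\ell\trans{v}q_\ell$ therefore needs $\ell\in\{j+1,j+2\}$. The case $\ell=j+2$ is impossible since $L_j\cap L_{j+2}=\emptyset$. The case $\ell=j+1$ forces $v\in c_j^*$, and on such words $q_j$ cannot reach $q_{j+1}$.
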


Let us describe the construction.
First construct an auxiliary automaton $\B_n$ on $q_1,\ldots,q_n$, with initial state $q_1$ and final state $q_n$.
For each $1\le i\le n-2$, introduce two letters $a_i,b_i$ and add the transitions in \cref{fig:lower-gadget}.
The automaton is the union of these overlapping three-state gadgets.

\begin{nolinenumbers}
\begin{center}
\begin{minipage}{\linewidth}
\centering
\captionsetup{type=figure}
\begin{tikzpicture}[
  >={Stealth[length=1.5mm]},
  lower state/.style={circle,draw,minimum size=8mm,inner sep=1pt},
  lower count/.style={circle,draw,fill=white,minimum size=5mm,inner sep=0pt,font=\scriptsize},
  every node/.style={font=\small},
  every edge/.style={draw,->}]
  \node[lower state] (p) at (0,-1.48) {$q_i$};
  \node[lower state] (r) at (2,-1.48) {$q_{i+1}$};
  \node[lower state] (s) at (4,-1.48) {$q_{i+2}$};
  \path (p) edge[loop above] node {$a_i$} (p)
        (r) edge[loop above] node {$a_i,b_i$} (r)
        (s) edge[loop above] node {$b_i$} (s)
        (p) edge node[above] {$b_i$} (r)
        (r) edge node[above] {$a_i$} (s)
        (p) edge[bend right=27] node[below] {$a_i,b_i$} (s);

  \foreach \row in {1,...,5}{
    \draw[black!15,line width=.25pt] (5.72,{-0.74*(\row-1)}) -- (12.77,{-0.74*(\row-1)});
    \node[anchor=east,font=\scriptsize] at (5.62,{-0.74*(\row-1)}) {$q_{\row}$};
  }
  \foreach \col in {0,...,6}{
    \draw[black!10,line width=.25pt] ({6+1.07*\col},.27) -- ({6+1.07*\col},-3.23);
  }
  \foreach \col/\letter/\idx in {1/a/1,2/b/1,3/a/2,4/b/2,5/a/3,6/b/3}{
    \node[font=\scriptsize] at ({6+1.07*(\col-.5)},.55) {$\letter_{\idx}$};
  }
  \foreach \name/\col/\row/\count in {n01/0/1/1,n11/1/1/1,n13/1/3/1,n22/2/2/1,n23/2/3/2,n32/3/2/1,n33/3/3/2,n34/3/4/3,n43/4/3/3,n44/4/4/4,n53/5/3/3,n55/5/5/7}{
    \node[lower count] (\name) at ({6+1.07*\col},{-0.74*(\row-1)}) {$\count$};
  }
  \node[lower count,double,double distance=.7pt] (n65) at (12.42,-2.96) {$10$};
  \foreach \from/\to in {n01/n11,n01/n13,n11/n22,n11/n23,n13/n23,n22/n32,n22/n34,n23/n33,n23/n34,n32/n43,n32/n44,n33/n43,n34/n44,n43/n53,n43/n55,n44/n55,n53/n65,n55/n65}{
    \draw[->] (\from) -- (\to);
  }
  \node at (2,-3.65) {(a) Transition gadget};
  \node at (9.15,-3.65) {(b) Accepting runs on $w_5$};
\end{tikzpicture}
\caption{(a) The gadget for $a_i,b_i$ in $\B_n$, added for every $1\le i\le n-2$.
(b) The accepting runs of $\B_5$ on $w_5=a_1b_1a_2b_2a_3b_3$; each node is labelled with the number of partial runs reaching it.}
\label{fig:lower-gadget}
\end{minipage}
\end{center}
\end{nolinenumbers}

On $w_n=a_1b_1\cdots a_{n-2}b_{n-2}$, each subword $a_i b_i$ transforms the run counts $(x,y)$ at $q_i,q_{i+1}$ into $(x+y,2x+y)$ at $q_{i+1},q_{i+2}$.
Starting from $(1,0)$, after $a_1 b_1 \cdots a_i b_i$ we arrive at $(P_{i+1}-P_i, 2 P_i)$.
Figure~\ref{fig:lower-gadget}(b) illustrates $w_5$, which has $2P_3=10$ accepting runs.

Now take $\Sigma=\{c_0,\ldots,c_4\}$ and interpret all subscripts of $c$ modulo five.
Relabel $a_i$ by $c_i$ and $b_i$ by $c_{i+1}$ to obtain $\A_n$, and let $u_n=c_1c_2\,c_2c_3\cdots c_{n-2}c_{n-1}$.
All the accepting runs counted above survive.
Finite ambiguity follows from \cref{theorem:ida}: a word labelling a loop at $q_j$ uses only $c_{j-1},c_j$, and a run on these letters starting at $q_j$ cannot pass $q_{j+2}$.
Neither $q_{j+1}$ nor $q_{j+2}$ can complete an IDA witness.
The full proof is in \cref{app:lower}.


\bibliography{refs}
\clearpage
\appendix

\section{Proof of the lower bound}\label{app:lower}

\begin{proof}[Proof of \cref{theorem:lower}]
Fix $n\ge 3$.
We first define the auxiliary automaton $\B_n$ formally, then count its runs on $w_n$, and finally verify that relabelling its transitions gives a finitely ambiguous automaton over five letters.

\paragraph*{The auxiliary automaton.}
Let $Q=\{q_1,\ldots,q_n\}$ and $\Gamma_n=\{a_i,b_i\mid 1\le i\le n-2\}$.
Set $\B_n=(Q,\Gamma_n,\delta_n,q_1,q_n)$, where
\begin{align*}
\delta_n=\bigcup_{i=1}^{n-2}\{\,
 &(q_i,a_i,q_i),\ (q_{i+1},a_i,q_{i+1}),\
   (q_i,a_i,q_{i+2}),\ (q_{i+1},a_i,q_{i+2}),\\
 &(q_{i+1},b_i,q_{i+1}),\ (q_{i+2},b_i,q_{i+2}),\
   (q_i,b_i,q_{i+1}),\ (q_i,b_i,q_{i+2})\,\}.
\end{align*}
These are the transitions of \cref{fig:lower-gadget}.
Every transition either stays in its state or increases its index, so all SCCs are singletons.
There is a transition from $q_j$ to $q_{j+1}$ for every $1\le j<n$: use $b_j$ if $j\le n-2$, and $a_{n-2}$ if $j=n-1$.
Hence $\B_n$ is trim.

\paragraph*{Counting runs.}
For $0\le i\le n-2$, let $v_i=a_1b_1\cdots a_i b_i$, where $v_0=\epsilon$.
We show by induction on $i$ that all runs from $q_1$ on $v_i$ end in $\{q_{i+1},q_{i+2}\}$, and that their respective numbers $x_i,y_i$ satisfy
\begin{equation*}
  (x_i,y_i)=(P_{i+1}-P_i,2P_i).
\end{equation*}
For $i=0$, we have $(x_0,y_0)=(1,0)=(P_1-P_0,2P_0)$.
For the induction step, let $1\le i\le n-2$.
If the counts at $q_i,q_{i+1}$ before reading $a_i b_i$ are $(x,y)$, then reading $a_i$ gives counts $x,y,x+y$ at $q_i,q_{i+1},q_{i+2}$.
Reading $b_i$ then gives counts $x+y,2x+y$ at $q_{i+1},q_{i+2}$, with no other states reachable.
By the induction hypothesis, this yields
\[
  (x_i,y_i)=(x_{i-1}+y_{i-1},2x_{i-1}+y_{i-1})=(P_i+P_{i-1},2P_i)=(P_{i+1}-P_i,2P_i),
\]
where the last equality uses the Pell recurrence.
Since $v_{n-2}=w_n$ and $q_n$ is the only final state,
\begin{equation}\label{eq:lower-witness}
  |\Runs_{\B_n}(w_n)|=y_{n-2}=2P_{n-2}.
\end{equation}

\paragraph*{Reducing the alphabet.}
Let $\Sigma=\{c_0,\ldots,c_4\}$, using the convention $c_{j+5}=c_j$ for every integer $j$.
Define the letter map $h:\Gamma_n\to\Sigma$ by $h(a_i)=c_i$ and $h(b_i)=c_{i+1}$, and extend $h$ to words by concatenation.
Let $\A_n=(Q,\Sigma,\delta'_n,q_1,q_n)$, where
\[
 \delta'_n=\{(p,h(a),q)\mid(p,a,q)\in\delta_n\}.
\]
The underlying directed graph is unchanged, so $\A_n$ is trim and has singleton SCCs.
For a fixed word in $\Gamma_n^*$, a run is uniquely determined by its sequence of states.
Consequently, distinct runs of $\B_n$ on $w_n$ remain distinct runs of $\A_n$ on $u_n=h(w_n)$.
By \cref{eq:lower-witness}, $|\Runs_{\A_n}(u_n)|\ge 2P_{n-2}$.

\paragraph*{Finite ambiguity.}
Inspection of $\delta'_n$ gives the following possible labels; near the endpoints some of them may be absent:
\[
\begin{array}{c|c}
 \text{transition} & \text{possible labels} \\ \hline
 q_j\to q_j     & c_{j-1},c_j \\
 q_j\to q_{j+1} & c_{j-1},c_{j+1} \\
 q_j\to q_{j+2} & c_j,c_{j+1}
\end{array}
\]
Fix $j$ and put $L_j=\{c_{j-1},c_j\}$.
Every run from $q_j$ on a word in $L_j^*$ stays in $\{q_j,q_{j+1},q_{j+2}\}\cap Q$.
Indeed, a transition from $q_j$ increases the index by at most two.
From $q_{j+1}$, the only possible transition to a different state using $L_j$ goes to $q_{j+2}$, on $c_j$.
From $q_{j+2}$, no outgoing transition uses $L_j$: such labels belong to $\{c_{j+1},c_{j+2},c_{j+3}\}$, which is disjoint from $L_j$ because the alphabet has five letters.

Suppose that $\A_n$ complies with IDA.
There are distinct states $q_j,q_\ell$ and a word $v$ such that $q_j\trans{v}q_j$, $q_j\trans{v}q_\ell$, and $q_\ell\trans{v}q_\ell$.
State indices never decrease, so $\ell>j$, $v\ne\epsilon$, and the two runs returning to their starting states use only self-loops.
In particular, $v\in L_j^*$, and the preceding observation implies $\ell\in\{j+1,j+2\}$.
If $\ell=j+1$, all letters of $v$ lie in $L_j\cap L_{j+1}=\{c_j\}$.
But a run from $q_j$ on $c_j^*$ cannot reach $q_{j+1}$: no transition from $q_j$ to $q_{j+1}$ has label $c_j$, and skipping to $q_{j+2}$ precludes returning.
If $\ell=j+2$, then $L_j\cap L_{j+2}=\emptyset$, contradicting $v\ne\epsilon$.
Both cases are impossible.
By \cref{theorem:ida}, $\A_n$ is finitely ambiguous.

Finally, the characteristic roots of the Pell recurrence are $1+\sqrt{2}$ and $1-\sqrt{2}$.
Its initial values give
\[
 2P_{n-2}=\frac{(1+\sqrt{2})^{n-2}-(1-\sqrt{2})^{n-2}}{\sqrt{2}}
          =\Theta((1+\sqrt{2})^n),
\]
which proves the remaining assertions.
\end{proof}

\end{document}